\definecolor{myred}{RGB}{165,41,33}
\definecolor{mygrey}{RGB}{130,130,130}
\definecolor{navy}{rgb}{0,0,0.5}
\definecolor{refkey}{gray}{.5}
\definecolor{labelkey}{gray}{.5}
\newcommand{\nats}{\mathbb{N}}
\newcommand{\Pow}{{\mathscr{P}}}
\newcommand{\pow}[1]{\Pow({#1})}
\newcommand{\st}{\mid}
\renewcommand{\leq}{\leqslant}
\renewcommand{\geq}{\geqslant}
\newcommand{\disj}{\mathsf{Disj}}
\newcommand{\Disj}[2]{\disj\!\left(#1,#2\right)}
\newcommand{\eqnothing}{=\!\!\varnothing}
\newcommand{\neqnothing}{\neq\!\!\varnothing}
\newcommand{\eqqnothing}{=\varnothing}
\newcommand{\neqqnothing}{\neq\varnothing}
\newcommand{\COMMENT}[1]{}
\newcommand{\defAsEq}{\mathrel{\overset{\makebox[0pt]{\mbox{\normalfont\tiny\sffamily Def}}}{:=}}}
\newcommand{\Quote}[1]{#1}
\newcommand{\Vars}[1]{\mathit{Vars}(#1)}
\newcommand{\MLS}{\textnormal{\textsf{MLS}}\xspace}
\newcommand{\MLSSP}{\textnormal{\textsf{MLSSP}}\xspace}
\newcommand{\MLSP}{\textnormal{\textsf{MLSP}}\xspace}
\newcommand{\MLSU}{\textnormal{\textsf{MLSU}}\xspace}
\newcommand{\BST}{\textnormal{\textsf{BST}}\xspace}
\newcommand{\BSTC}{\textnormal{\textsf{BST$\!\times$}}\xspace}
\newcommand{\BSTuC}{\textnormal{\textsf{BST\raisebox{.026cm}{$\otimes$}}}\xspace}
\newcommand{\MSTtheory}[1]{\mathsf{MST}\hbox{$(#1)$}}
\newcommand{\MLSS}{\textnormal{\textsf{MLSS}}\xspace}
\newcommand{\MLSC}{\textnormal{\textsf{MLS$\!\times$}}\xspace}
\newcommand{\MLSI}{\textnormal{\textsf{MLSI}}\xspace}
\newcommand{\MLSuC}{\textnormal{\textsf{MLS\raisebox{.026cm}{$\otimes$}}}\xspace}
\newcommand{\NP}{\textnormal{\textsf{NP}}\xspace}
\newcommand{\defAs}{\coloneqq}
\newcommand{\true}{\textsf{true}}
\newcommand{\Theory}[1]{\BST(#1)}
\newcommand{\BSTTh}[1]{\mathsf{BST}\hbox{$(#1)$}}
\newcommand{\MSTTh}[1]{\mathsf{MST}\hbox{$(#1)$}}
\newcommand{\dom}[1]{\mathsf{dom}(#1)}
\newcommand{\rk}[1]{\mathsf{rk}\left(#1\right)}
\newcommand{\BSTbb}{\textnormal{$\mathbb{BST}$}\xspace}
\newcommand{\MSTbb}{\textnormal{$\mathbb{MST}$}\xspace}
\newtheorem{theorem}{Theorem}[section]
\newtheorem{lemma}[theorem]{Lemma}
\newtheorem{example}[theorem]{Example}
\newtheorem{corollary}[theorem]{Corollary}
\newcommand{\Varsf}{\Vars{\varphi}}
\title{On the Convexity of a Fragment\\ of Pure Set Theory with Applications \\within a Nelson-Oppen Framework\thanks{We gratefully acknowledge partial support from project ``STORAGE---Universit\`{a} degli Studi di Catania, Piano della Ricerca 2020/2022, Linea di intervento 2''.}}
\author{Domenico Cantone \qquad\qquad Pietro Maugeri
\institute{Dept. of Mathematics and Computer Science\\
University of Catania\\
Catania, Italy}
\email{domenico.cantone@unict.it \qquad\qquad pietro.maugeri@unict.it}
\and
Andrea De Domenico
\institute{Scuola Superiore di Catania\\University of Catania\\
Catania, Italy}
\email{andrea.dedomenico@studium.unict.it}
}
\date{}
\begin{document}

\maketitle
\begin{abstract}
    The Satisfiability Modulo Theories (SMT) issue concerns the satisfiability of formulae from multiple background theories, usually expressed in the language of first-order predicate logic with equality. SMT solvers are often based on variants of the Nelson-Oppen combination method, a solver for the quantifier-free fragment of the combination of  theories with disjoint signatures, via cooperation among their decision procedures. When each of the theories to be combined by the Nelson-Oppen method is \emph{convex} (that is, any conjunction of its literals can imply a disjunction of equalities only when it implies at least one of the equalities) and decidable in polynomial time, the running time of the combination procedure is guaranteed to be polynomial in the size of the input formula. 
In this paper, we prove the convexity of a fragment of Zermelo-Fraenkel set theory, called Multi-Level Syllogistic, most of whose polynomially decidable fragments we have recently characterized.\\[.2cm]
\textbf{Keywords:} convex theories, satisfiability modulo theories,  
decision problem, computable set theory.

\end{abstract}
\section*{Introduction}

In the process of developing reliable and provably correct software, it is often necessary to express and then subsequently verify properties that belong to different logical languages. Thus, the correctness of a software system depends on being able to prove these conditions, expressed in distinct first-order signatures with equality. %
The search for a satisfying assignment of a given formula with respect to some background first-order theory is known as the SMT (Satisfiability Modulo Theories) problem.

SMT solvers \cite{BST10} are particularly useful tools for the automated verification of properties expressed with quantifier-free first-order formulae. Some theories usually integrated with common SMT solvers are the theory of arrays, of bit-vectors, of linear arithmetic, and the theory of uninterpreted functions.

Every background theory used in some SMT solver comes along with its own satisfiability procedure. The problem of modularly combining such special-purpose algorithms is highly non-trivial, since without the appropriate restrictions it is not even decidable \cite{BGNRZ06}. 

We will now briefly introduce some definitions to understand how to tackle this question and under which assumptions one can do it effectively.

A first-order quantifier-free theory $T$, identified with the set of its theorems, is \emph{stably infinite} if every formula $\varphi$ satisfiable in $T$ is satisfiable in an infinite model of $T$. Let $\Sigma_{1}$ and $\Sigma_2$ be signatures for a first-order language. A $(\Sigma_1 \cup \Sigma_2)$-formula $\varphi$ is \emph{pure} if every literal in $\varphi$ is a $\Sigma_1$-literal or a $\Sigma_2$-literal. It is easy to see that every quantifier-free $(\Sigma_1 \cup \Sigma_2)$-formula $\varphi$ can be \emph{purified}, yet maintaining satisfiability, by (i) substituting every \emph{impure} subterm of the form $f(t)$ with $f(x)$, where $x$ is a new variable, (ii) adding to $\varphi$ the conjunct $x = t$, and (iii) recursively purifying the term $t$, if needed. 

We say that two theories $T_1$ and $T_2$ over the signatures $\Sigma_1$ and  $\Sigma_2$, respectively, are \emph{disjoint} when  $\Sigma_1$ and  $\Sigma_2$ do not share any non-logical symbols.\footnote{Besides propositional connectives, logical symbols comprise equality.} The Nelson-Oppen \cite{NO79} procedure provides a method for combining decision procedures for disjoint, stably infinite theories $T_1$ and $T_2$ into one for $T_1 \oplus T_2$, namely the $(\Sigma_1 \cup \Sigma_2)$-theory defined as the deductive closure of the union of the theories $T_1$ and $T_2$.

A theory $T$ is \emph{convex} if for all conjunctions of literals $\varphi$ in $T$ and for all nonempty disjunctions $\bigvee_{i =1}^n x_i = y_i$ of equalities, $\varphi$ implies $\bigvee_{i =1}^n x_i = y_i$ in $T$ if and only if $\varphi$ implies $x_i = y_i$ in $T$ for some $i \in \{1,...,n\}$.

Examples of convex theories are the theory of Linear Rational Arithmetic $\mathsf{T_{LRA}}$ and the theory of list structure $\mathsf{T_{L}}$.
	
The non-logical symbols of the theory of $\mathsf{T_{LRA}}$ are $+$, $-$, $\leq$, $0$, $1$; following \cite[Chapter 3.4.2]{BradleyManna2007}, its axioms (universally quantified) are:
	\begin{align*}
	&x + 0 = x, 
	&&x + (-x) = 0, \\
	&(x+y)+z = x+(y+z), 
	&&x + y = y + x, \\
	&x \leq y \land y \leq x \rightarrow x = y, 
	&&x \leq y \lor y \leq x, \\
	&x \leq y \rightarrow x + z \leq y + z,
	&&x \leq y \land y \leq z \rightarrow x \leq z,\\
    &nx = 0 \rightarrow x = 0,
	&& (\exists y)\ x = ny \qquad \text{(for each positive integer $n$),}
	\end{align*}
where $nx$ stands for $\underbrace{x+\cdots+x}_{n \text{ times}}$.
After \cite{NO79}, the non-logical symbols of the theory of list structure $\mathsf{T_{L}}$ are \textsf{car}, \textsf{cdr}, \textsf{cons}, and \textsf{atom}, and its axioms are:
	\begin{align*}
	&\textsf{car}(\textsf{cons}(x,y)) = x, \\
	&\textsf{cdr}(\textsf{cons}(x,y)) = y, \\
	&\neg \textsf{atom}(x) \rightarrow \textsf{cons}(\textsf{car}(x), \textsf{cdr}(x)) = x, \\
	&\neg \textsf{atom}(\textsf{cons}(x, y)),
	\end{align*}
where (i) \textsf{cons} is a binary function, with $\textsf{cons}(x, y)$ representing the list constructed by prepending the object $x$ to the list $y$, (ii) \textsf{car} and \textsf{cdr} are unary functions, the left and right projections, respectively, and (iii) \textsf{atom} is true if and only if $x$ is a single-element list. 

Given two disjoint stable infinite theories $T_1$ and $T_2$, the Nelson-Oppen combination technique establishes the satisfiability of a conjunction of pure formulae $\varphi_1 \land \varphi_2$ (where $\varphi_i$ has signature $\Sigma_i$) in $T_1 \oplus T_2$ from the decision procedures for $\varphi_1$ and $\varphi_2$. The key idea is to propagate equalities $x = y$ to $\varphi_2$ whenever $T_1 \cup \varphi_1$ implies $x = y$, and conversely. This iterative process can be performed quickly in polynomial time, when the theories involved are convex. On the other hand, case-splitting would occur when dealing with non-convex theories, since only one of the equalities of the disjunct implied by $\varphi_i$ must be chosen at every step.

In \cite{MultiZarba02, Zarba02, Zarba05}, variants of the Nelson-Oppen method were used to combine theories involving sets/multisets of  urelements (i.e., objects with no internal structure) with the theory of integers and with the theory of cardinal numbers in presence of a cardinality operator. The SMT problem in the context of the theory of finite sets is considered in \cite{BBRT18}.

\smallskip

In this paper, we start an investigation for combining decidable fragments of pure Zermelo-Fraenkel set theory (in which sets are recursively built up from other sets) with other theories within the Nelson-Oppen framework. More specifically, our main result is that the theory Multi-Level Syllogistic (the basic language of computable set theory---\MLS for short) is convex and therefore its decision procedure (and those of its several polynomial fragments \cite{CantoneDMO21,CMO20}) can be efficiently combined with the decision procedures of other basic decidable theories, such as for instance the theory of lists and the theory of linear rational arithmetic, since set theory is plainly stably infinite.

\begin{center}---------------\end{center}

The paper is organized as follows. Section~\ref{SEC:MLSsemantics} introduces the syntax and semantics of the theory \MLS of our interest. Then, in Section~\ref{SEC:convex}, we prove the main result of the paper, namely that the theory \MLS is convex. We also review several fragments of \MLS endowed with polynomial-time decision procedures, since these inherit convexity from \MLS and are therefore particularly interesting for efficient combinations with other convex decidable theories. Subsequently, in Section~\ref{non-convexity}, we prove the non-convexity of various extensions of \MLS. Finally, in Section~\ref{SEC:conclusions}, we provide some closing remarks and plans for future research.

\section{Syntax and semantics of \MLS}\label{SEC:MLSsemantics}
Multi-Level Syllogistic (\MLS) is the quantifier-free propositional 
closure of atoms of the types:
\begin{equation}\label{complete MLS suite}
x = \varnothing, \quad x = y, \quad x \subseteq y, \quad x \in y, \quad  x = y \setminus z, \quad x = y \cup z, \quad x = y \cap z,
\end{equation}
where $x,y,z$ stand for set variables. We denote by $\Vars{\varphi}$ the collection of the set variables occurring in any \MLS-formula $\varphi$.

The satisfiability problem for \MLS has been first solved in the seminal paper \cite{FOS80b}. Its \NP-completeness (and that of its extension \MLSS with the singleton operator) has been later proved in \cite{CanOmoPol90}. Several extensions of \MLS have been proved decidable over the years, giving rise to the field of Computable Set Theory (see \cite{CanFerOmo89a,CanOmoPol01,SchCanOmo11,CanUrs18} for an in-depth account).

The semantics of \MLS is defined in the most natural way by means of \textit{set assignments}.

A \textsc{set assignment} $M$ is any map from a \emph{finite} collection of set variables $V$, %
denoted $\textsf{dom}(M)$,
into the von Neumann universe \textbf{$\mathcal{V}$}. 

We recall that \textbf{$\mathcal{V}$} is the cumulative hierarchy constructed in stages by transfinite recursion over the class $\mathit{On}$ of all ordinals. Specifically, 
\[
\textbf{$\mathcal{V}$} \coloneqq \bigcup_{\alpha \in \mathit{On}}\mathcal{V}_\alpha,
\]
where, recursively, 
\[
\mathcal{V}_{\alpha} \coloneqq \bigcup_{\beta < \alpha}\pow{\mathcal{V}_\beta},
\]
for every $\alpha \in \mathit{On}$, with $\pow{\cdot}$ denoting the powerset operator.
%\begin{align*}
%&\mathcal{V}_0 \coloneqq \emptyset,\\
%&\mathcal{V}_{\alpha+1} \coloneqq \pow{\mathcal{V}_{\alpha}}, &\text{for each ordinal } \alpha,\\
%&\mathcal{V}_{\lambda} \coloneqq \bigcup_{\alpha < \lambda}\mathcal{V}_\alpha, &\text{for each limit ordinal } \lambda;\\
%&\textbf{$\mathcal{V}$} \coloneqq \bigcup_{\alpha \in \mathit{On}}\mathcal{V}_\alpha.
%\end{align*}

The notion of rank of a set is strictly connected to the construction steps of the von Neumann hierarchy. Specifically, for any set $s \in $\textbf{$\mathcal{V}$}, the rank of $s$ (denoted $\rk{s}$) is defined as the least ordinal $\alpha$ such that $s \subseteq \mathcal{V}_\alpha$. The rank function is extended to set assignments $M$, by putting $\rk{M} \defAs \max \{\rk{Mx} \st x \in \dom{M}\}$.

The set operators and relators of \MLS are interpreted according to their usual semantics. Thus, given a set assignment $M$, we put:
\begin{align*}
M(x \star y) &\coloneqq Mx \star My\\
\shortintertext{and}
M(x = y) = \true &\iff Mx = My, \\
M(x \in y) = \true &\iff Mx \in My, \\
M(x = y \star z) = \true &\iff Mx = M(y \star z),
\end{align*}
where $\star \in \{\cup,\,\cap,\, \setminus\}$ and $x,y,z \in \textsf{dom}(M)$.

Finally, for all $\MLS$-formulae $\varphi$ and $\psi$, we put by structural recursion:
\begin{align*}
&M(\neg\varphi) \coloneqq \neg M(\varphi), &M(\varphi \wedge \psi) \coloneqq M\varphi \wedge M\psi,\\
&M(\varphi \vee \psi) \coloneqq M\varphi \vee M\psi, &M(\varphi \rightarrow \psi) \coloneqq M\varphi \rightarrow M\psi.
\end{align*}
An \MLS-formula $\varphi$ is \textsc{satisfiable} if there exists a set assignment $M$ over $\Vars{\varphi}$ such that $M\varphi = \true$, in which case we also write $M \models \varphi$ and say that $M$ is a \textsc{model} for $\varphi$. 
If $\varphi$ is satisfied by all set assignments, we say that $\varphi$ is \textsc{true} and write $\models \varphi$.

By way of disjunctive normal form, the satisfiability problem for \MLS can be reduced to the satisfiability problem for conjunctions of \MLS-literals, namely \MLS-atoms of types \eqref{complete MLS suite} and their negation. 
In addition, for the purposes of simplifying some proofs, we can further restrict ourselves to \MLS-conjunctions involving a minimal number of literal types. As shown in \cite{CantoneDMO21}, all the atoms in \eqref{complete MLS suite} and their negations can be rewritten in terms of  atoms of type $x \in y$ and $x = y \setminus z$ only by repeatedly applying the following equivalences much as rewrite rules (the existential quantifiers are then just dropped while the quantified variables are replaced by fresh ones):
\begin{itemize}
\item $\models x = \varnothing \:\longleftrightarrow\: x = x \setminus x$,

\item $\models x \neq \varnothing \:\longleftrightarrow\: (\exists w) w \in x$,

\item $\models x \notin y \:\longleftrightarrow\: (\exists w) (x \in w\ \wedge\ w = w \setminus y)$,

\item $\models x = y \:\longleftrightarrow\: (\exists e) (x = y \setminus e\ \wedge\ e = e\setminus e)$,

\item $\models x = y \cap z \:\longleftrightarrow\: (\exists w)(w = y \setminus z\ \wedge\ x = y \setminus w)$,

\item $\models x = y \cup z \:\longleftrightarrow\: (\exists e,w)(w = x \setminus y\ \wedge\ w = z \setminus y\ \wedge\ e = y \setminus x\ \wedge\ e = e \setminus e)$,

\item $\models x \subseteq y \:\longleftrightarrow\: x = y \cap x$,

\item $\models x \neq y \:\longleftrightarrow\: (\exists v,w,z)(w = x \cup y\ \wedge\ z = x \cap y\ \wedge\ v \in w\ \wedge\ v \notin z)$,

\item $\models x \neq y \star z \:\longleftrightarrow\: (\exists w)(x \neq w\ \wedge\ w = y \star z)$,
\end{itemize}
where $\star \in \{\cup,\cap,\setminus\}$.

Henceforth, we will restrict ourselves to \MLS-formulae that are conjunctions of atoms of the following two types only:
\begin{equation}\label{basic literal types}
x \in y, \quad x = y \setminus z.
\end{equation}
In the rest of the paper, these will be simply referred to  as \MLS-conjunctions.

Finally, as a piece of notation, for any given \emph{finite} set $\mathcal{L}$ of literals, we write $\bigwedge \mathcal{L}$ (resp., $\bigvee \mathcal{L}$) to denote the conjunction (resp., disjunction) of all the literals in $\mathcal{L}$.

\section{Convexity of \MLS}\label{SEC:convex}

Our main goal is to prove that the theory \MLS is convex, namely that, for any  \MLS-conjunction $\varphi$ and any given finite \emph{nonempty} set $\mathcal{E}$ of equalities among variables, we have:
\[
\models \varphi \longrightarrow \bigvee \mathcal{E} \quad \implies \quad \models \varphi \longrightarrow x = y \text{, ~for some equality $x = y$ in $\mathcal{E}$}.
\]

To prove that the theory \MLS is convex, we will proceed by way of contradiction. 

Thus, let us suppose that there exists an \MLS-conjunction $\varphi$ (namely a conjunction of literals of  type \eqref{basic literal types}) and a finite, nonempty set $\mathcal{E}$ of equalities among variables such that:

\begin{enumerate}[label=(C\arabic*)]
\item\label{C1} $\models \varphi \longrightarrow \bigvee \mathcal{E}$;

\item\label{C2} $\not\models \varphi \longrightarrow x = y$, for any $x=y$ in $\mathcal{E}$\\
(that is, for every $x=y$ in $\mathcal{E}$ there exists some set assignment $M_{x,y}$ such that $M_{x,y} \models \varphi \wedge x \neq y$).
\end{enumerate}
It is not restrictive to additionally assume that $\Vars{\mathcal{E}} \subseteq \Vars{\varphi}$.\footnote{Indeed, without disrupting conditions \ref{C1} and \ref{C2}, for any variable $x \in \Vars{\mathcal{E}}$ one may add to $\varphi$ the literal $x \in w$, where $w$ stands for some fresh variable.}%

In view of condition \ref{C2}, our conjunction $\varphi$ is satisfiable. Among all the models for $\varphi$, we select one, say $M$, that satisfies as few as possible  equalities in $\mathcal{E}$, namely such that the cardinality of $\mathcal{E}^{+}_{M} \defAs \{ \ell \in \mathcal{E} \st M \models \ell \}$ is \emph{minimal}. We also set $\mathcal{E}^{-}_{M} \defAs \{ \neg\ell \st \ell \in \mathcal{E} \setminus  \mathcal{E}^{+}_{M}\}$, so $\mathcal{E}^{-}_{M}$ is the collection of the inequalities $x \neq y$ such that $x=y$ is in $\mathcal{E}$ and $M \not\models x = y$ (hence, $M \models x \neq y$).

Plainly, we have $M \models \varphi\ \wedge\ \bigwedge \mathcal{E}^{+}_{M}\ \wedge\ \bigwedge \mathcal{E}^{-}_{M}$. Notice that, while $\bigwedge \mathcal{E}^{-}_{M}$ may be empty, the conjunction $\bigwedge \mathcal{E}^{+}_{M}$ must contain at least one literal, since  $M \models \bigvee \mathcal{E}$ by condition \ref{C1}.

Let $\overline{\ell}$ be any equality $\overline{x} = \overline{y}$ in $\bigwedge \mathcal{E}^{+}_{M}$, which will be referred to in the rest of our proof as the \emph{designated equality of} $\mathcal{E}$. We will prove that the conjunction
\[
\varphi^{*}\ \defAsEq\ \varphi\ \wedge\ \bigwedge \big(\mathcal{E}^{+}_{M} \setminus \{\overline{\ell}\} \big) \ \wedge\ \bigwedge \mathcal{E}^{-}_{M} \ \wedge\ \overline{x} \neq \overline{y}
\]
is satisfiable, thereby contradicting the assumed minimality of $M$, since for every model $M^{*}$ for $\varphi^{*}$ we would have $\mathcal{E}^{+}_{M^{*}} = \mathcal{E}^{+}_{M} \setminus \{\overline{\ell}\}$, and therefore $|\mathcal{E}^{+}_{M^{*}}| < |\mathcal{E}^{+}_{M}|$.

\medskip

Before diving into the details of the proof, we provide an overview of how the set assignment $M$ can be suitably enlarged into another set assignment $M^{*}$ that satisfies all the conjuncts of $\varphi \wedge \bigwedge \mathcal{E}^{+}_{M} \wedge \bigwedge \mathcal{E}^{-}_{M}$ but the designated equality $\overline{\ell}$, thus proving that $\varphi^{*}$ is satisfiable.

\newcommand{\myset}{\mathfrak{s}}
\newcommand{\myt}{\mathfrak{t}}

\subsubsection*{Proof overview}
The construction of $M^{*}$ consists in two phases: the first one, the \textsc{Boolean phase}, takes care of the satisfiability of the Boolean literals of $\varphi^{*}$, namely the literals in $\varphi^{*}$ of type $x = y \setminus z$, $x = y$, and $x \neq y$, whereas the second one, the \textsc{membership phase}, takes care of the satisfiability of the membership literals of $\varphi^{*}$, namely those of the form $x \in y$.

In order to model $\overline{x} \neq \overline{y}$, we add to exactly one between $M\overline{x}$ and $M\overline{y}$ a new member $\myset$ not already occurring in $\bigcup_{x \in \Vars{\varphi}} Mx$. The set $\myset$ must be chosen with care to prevent that no set produced during the subsequent membership phase is new to the current set assignment. In addition, the set $\myset$ must be added to the right sets $Mx$ in order that the resulting assignment keeps satisfying all of the Boolean literals in $\varphi \wedge \bigwedge \mathcal{E}^{+}_{M} \wedge \bigwedge \mathcal{E}^{-}_{M}$ other than the designated equality $\overline{x} = \overline{y}$. The first problem is solved by selecting as $\myset$ any set of rank strictly greater than that of $M$. As for the second condition, recalling that, by  \ref{C2}, the conjunction $\varphi \wedge \overline{x} \neq \overline{y}$ is satisfiable, we can select a model $\overline{M}$ for it. Therefore $\overline{M}\overline{x} \neq \overline{M}\overline{y}$, and so we can pick some element $\myt$ belonging to exactly one of the sets $\overline{M}\overline{x}$ and $\overline{M}\overline{y}$. By adding our special set $\myset$ as an element to all and only those sets $Mx$ such that $\myt \in \overline{M}x$, for $x \in \Vars{\varphi}$, we obtain a new assignment, which will be denoted $M_{0}$. It turns out that $M_{0}$ correctly models all the conjuncts in $\varphi^{*}$, but the membership literals $x \in y$ for which $M_{0}x \neq Mx$. We denote by $\mathsf{V}_{0}$ the collection of variables $x$ in $\varphi$ such that $M_{0}x \neq Mx$.

\begin{example}
We illustrate the Boolean phase of our enlargement process with the following \MLS-conjunction 
\[
\varphi\ \defAsEq \  x = \overline{y} \setminus z\ \wedge\ x = \overline{x} \setminus w\ \wedge\ x \neq \overline{y}\ \wedge\
\overline{y} \in w\ \wedge\ w \in v\ \wedge\ z \in v
\]
and with the equality $\overline{x} = \overline{y}$.

Let $M$ and $\overline{M}$ be the set assignments over $\Vars{\varphi} = \{v, w, x, \overline{x}, \overline{y}, z\}$ so defined, where to enhance readability we use the shorthand $\{\emptyset\}^{2} \defAs \{\{\emptyset\}\}$---likewise, $\{\emptyset\}^{4}$ will denote the set $\{\{\{\{ \emptyset \}\}\}\}$:
\begin{align*}
 Mx &= \emptyset,& M\overline{x} &= M\overline{y} = \{\emptyset\},& Mz &= Mw = \{\emptyset,\{\emptyset\}\},& Mv &= \{\{\emptyset,\{\emptyset\}\}\},\\
 \overline{M}x &= \emptyset,& \overline{M}\overline{y} &= \overline{M}z = \{\emptyset\},& 
\overline{M}\overline{x} &= \overline{M}w = \{\emptyset\}^{2},& \overline{M}v &= \{\{\emptyset\},\{\emptyset\}^{2}\}.
\end{align*}

It can easily be checked that $M \models \varphi \wedge \overline{x} = \overline{y}$ ~and~ $\overline{M} \models \varphi \wedge \overline{x} \neq \overline{y}$ hold.

Let $\myset \defAs \{\emptyset\}^{4}$, so that $\rk{\myset} = 4 > 3 = \rk{Mv} = \rk{M}$. Since $\emptyset \in \overline{M}\overline{y} \setminus \overline{M}\overline{x}$, we can put $\myt \defAs \emptyset$, and so we have:
\begin{align*}
M_{0}u = \begin{cases}
\{\emptyset,\myset\}& \text{if } u = \overline{y} \\
\{\emptyset, \{\emptyset\},\myset\}& \text{if } u = z\\
Mu& \text{otherwise}
\end{cases}
\end{align*}
and $\mathsf{V}_{0} = \{\overline{y}, z\}$.

Plainly, $M_{0}$ satisfies all literals in $\varphi \wedge \overline{x} \neq \overline{y}$ but the literals $\overline{y} \in w$ and $z \in v$. \qed
\end{example}

The subsequent membership phase performs the following enlargement step, for $k=0,1,2,\ldots$, until needed:
\begin{quote}
extend the assignment $M_{k}$ by putting, for each $u \in \mathsf{V}_{k}$,
\[
M_{k+1}u\ \defAs\ M_{k}u\ \cup\ \{ M_{k}v \st v \in \mathsf{V}_{k} \text{ and } Mv \in Mu\},
\]
while setting $M_{k+1}u \defAs M_{k}u$ for the remaining variables $u$ in $\Vars{\varphi}$, and define $\mathsf{V}_{k+1}$ as the collection of variables $u$ in $\Vars{\varphi}$ such that $M_{k+1}u \neq M_{k}u$.
\end{quote}
For $k = 0,1,2,\ldots$, it turns out that each $M_{k}$ correctly models all the Boolean literals in $\varphi^{*}$ and all the membership literals in $\varphi^{*}$ but those of the form $x \in y$ with $x \in \mathsf{V}_{k+1}$. Hence, as soon as some $\mathsf{V}_{k}$ is empty, the assignment $M_{k}$ is plainly a model for $\varphi^{*}$, and so the membership phase can stop. By the well-foundedness of the membership relation, such a situation occurs in at most $\overline{n} \defAs |\Vars{\varphi}|$ steps, and therefore $M_{\overline{n}}$ is a model for $\varphi^{*}$, proving that $\varphi^{*}$ is satisfiable.

\setcounter{theorem}{0}\begin{example}[cont'd]
We continue our example by illustrating the membership phase of our enlargement process. We recall that $\mathsf{V}_{0} = \{\overline{y}, z\}$. Since $M\overline{y} \in Mw = Mz$ and $Mz \in Mv$, we have $\mathsf{V}_{1} = \{z,w,v\}$, $M_{1}z = \{\emptyset,\{\emptyset\},\mathfrak{s}, \{\emptyset,\mathfrak{s}\}\}$, $M_{1}w = \{\emptyset,\{\emptyset\}, \{\emptyset,\mathfrak{s}\}\}$, $M_{1}v = \{ \{\emptyset,\{\emptyset\}\}, \{\emptyset,\{\emptyset\},\mathfrak{s}\}\}$, and $M_{1}u = M_{0}u$ for all $u \neq z,w,v$. Next, since $Mw \in Mv$ and $Mz \in Mv$, we have $\mathsf{V}_{2} = \{v\}$, $M_{2}u = M_{1}u$ for all $u \neq v$, and  \\ 
\centerline{$
M_{2}v = \{ \{\emptyset,\{\emptyset\}\}, \{\emptyset,\{\emptyset\},\mathfrak{s}\}, \{\emptyset,\{\emptyset\},\mathfrak{s},\{\emptyset,\mathfrak{s}\}\},
 \{\emptyset,\{\emptyset\}, \{\emptyset,\mathfrak{s}\}\}\}.
$}
Finally, since $Mv \notin \bigcup_{u \in \Vars{\varphi}} Mu$, we can actually stop. In fact, at this point we have 
\[
M_{2} = M_{3} = M_{4} = \cdots.
\] 
Plainly, $M_{2} \models \varphi \wedge \overline{x} \neq \overline{y}$. 	\qed
\end{example}

\subsubsection*{Proof details}

For any $V \subseteq \Vars{\varphi}$, we will use the notation $MV$ to denote the set $\{Mv \st v \in V\}$. Let $\myset$ be any fixed set whose rank is larger than the rank of $M$, namely such that $\rk{\myset} > \rk{M}$. 

We define by recursion two sequences $\{\mathsf{V}_{n}\}_{n \in \nats}$ and $\{M_{n}\}_{n \in \nats}$, respectively of subsets of $\Vars{\varphi}$ and of set assignments over $\Vars{\varphi}$, by putting:\begin{align}
\mathsf{V}_0 &\coloneqq \{u \in \Varsf\ |\ \myt \in \overline{M}u\}, \label{DEF:v0}\\
\mathsf{V}_n &\coloneqq \{u \in \Varsf\ |\ Mu \cap M \mathsf{V}_{n-1} \neq \emptyset \}, ~~\text{for } n \geq 1, \label{DEF:vn}
\shortintertext{and}
M_{0}v &\coloneqq
\begin{cases}
Mv \cup \{\myset\} &\text{if } v \in \mathsf{V}_0\\
Mv &\text{if } v \in \Vars{\varphi} \setminus \mathsf{V}_0,
\end{cases}\label{DEF:m0}\\
M_{n} v &\coloneqq \begin{cases}
M_{n-1}v \cup M_{n-1} \{u \in \mathsf{V}_{n-1} \st Mu \in Mv\} & \text{if } v \in \mathsf{V}_n\\
M_{n-1} v & \text{if } v \in \Vars{\varphi} \setminus \mathsf{V}_n,
\end{cases} \label{DEF:mn}\\
&\phantom{\coloneqq~~~}\text{for $n \geq 1$ and $v \in \Vars{\varphi}$.}\notag
\end{align}

As a direct consequence of \eqref{DEF:m0} and \eqref{DEF:mn}, the following results can be easily proved by induction:
\begin{lemma}\label{LEMMA:subsetmodel}
\begin{enumerate}[label=(\alph*)]
\item\label{LEMMA:subsetmodelA}
For every $v \in \Varsf$, we have
\[
Mv\ \subseteq\ M_{0}v\ \subseteq\ \cdots\ \subseteq\ M_{n} v\ \subseteq\ \cdots\/.
\]

\item\label{LEMMA:subsetmodelB} For all $v \in \Vars{\varphi}$ and $n \in \nats$, we have:
\[
M_{n}v\ \subseteq\ M v\ \cup\ \{\myset\}\ \cup\ \bigcup_{k=0}^{n-1} M_{k} \{u \in \mathsf{V}_{k} \st Mu \in Mv\}.
\]
\end{enumerate}
\end{lemma}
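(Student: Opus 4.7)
The plan is to handle both parts by straightforward induction on $n$, reading the bounds directly off the recursive definitions \eqref{DEF:m0} and \eqref{DEF:mn}. Neither part appears to require any nontrivial combinatorial argument; the only thing to be careful about is the split between the cases $v \in \mathsf{V}_n$ and $v \notin \mathsf{V}_n$ in the definition of $M_n$, which I will absorb using monotonicity of the right-hand bounds.

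For part \ref{LEMMA:subsetmodelA}, I would observe that the chain reduces to showing $Mv \subseteq M_0 v$ and $M_{n-1} v \subseteq M_n v$ for every $n \geq 1$. The base inclusion is immediate from \eqref{DEF:m0}: either $M_0 v = Mv \cup \{\myset\}$ (when $v \in \mathsf{V}_0$) or $M_0 v = Mv$, and in both cases $Mv \subseteq M_0 v$. The inductive step is equally direct from \eqref{DEF:mn}: either $M_n v = M_{n-1}v \cup M_{n-1}\{u \in \mathsf{V}_{n-1} \st Mu \in Mv\}$, or $M_n v = M_{n-1}v$, and in either case $M_{n-1} v \subseteq M_n v$. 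Chaining the inclusions yields~\ref{LEMMA:subsetmodelA}.

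For part \ref{LEMMA:subsetmodelB}, I would again induct on $n$. For the base case $n = 0$ the empty union $\bigcup_{k=0}^{-1}(\cdots)$ disappears, so the target bound collapses to $M_0 v \subseteq Mv \cup \{\myset\}$, which holds by direct inspection of \eqref{DEF:m0}. For the inductive step, assuming the statement for $n-1$, I would note from \eqref{DEF:mn} that in both cases
\[
M_n v \ \subseteq\ M_{n-1} v\ \cup\ M_{n-1}\{u \in \mathsf{V}_{n-1} \st Mu \in Mv\}
\]
(the second summand being simply superfluous when $v \notin \mathsf{V}_n$). Applying the inductive hypothesis to the first summand gives
\[
M_{n-1}v\ \subseteq\ Mv\ \cup\ \{\myset\}\ \cup\ \bigcup_{k=0}^{n-2} M_{k}\{u \in \mathsf{V}_{k} \st Mu \in Mv\},
\]
and taking the union with $M_{n-1}\{u \in \mathsf{V}_{n-1} \st Mu \in Mv\}$ extends the range of $k$ to $0,\dots,n-1$, which is exactly the desired bound.

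Since both parts reduce to an unpacking of the two defining equations, I do not anticipate any real obstacle; the main point to stress in the write-up is simply that the case analysis on $v \in \mathsf{V}_n$ versus $v \notin \mathsf{V}_n$ can be absorbed by passing to a weaker superset in the smaller case, so that the induction goes through uniformly in both parts.
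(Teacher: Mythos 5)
Your proof is correct and follows exactly the route the paper intends: the authors state that the lemma is ``a direct consequence of \eqref{DEF:m0} and \eqref{DEF:mn}'' provable ``easily by induction,'' and your induction on $n$, with the case split on $v \in \mathsf{V}_n$ absorbed by passing to the superset, is precisely that argument spelled out. No gaps.
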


Lemma~\ref{LEMMA:subsetmodel}\ref{LEMMA:subsetmodelA} implies that the sequence of assignments $\{M_{n}\}_{n \in \nats}$ is plainly pointwise convergent. As a consequence of the next lemma and corollary, it will follow in fact that $\{M_{n}\}_{n \in \nats}$ converges ``uniformly'', and it does so in at most $|\Vars{\varphi}|$ steps.

\begin{lemma}\label{LEMMA:end}
Let $k \in \nats$. We have:
\begin{enumerate}[label=(\alph*)]
\item\label{LEMMA:endA} if $\mathsf{V}_{k} = \emptyset$ then, for all $n \geq k $,
\begin{enumerate}[label=(a$_{\arabic*}$)]
\item $\mathsf{V}_{n} = \emptyset$, 

\item $M_{n} = M_{k}$;
\end{enumerate}
\item\label{LEMMA:endB} if $\mathsf{V}_{k} \neq \emptyset$, then
\begin{equation}\label{eq:LEMMA:endB}
k\ \leq\ \min \big( |\Vars{\varphi}|-1, \rk{M} \big).
\end{equation}
\end{enumerate}
\end{lemma}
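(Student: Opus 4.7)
For part \ref{LEMMA:endA}, the plan is a straightforward induction on $n \geq k$. The base $n = k$ is trivial. For the inductive step, assuming $\mathsf{V}_n = \emptyset$ and $M_n = M_k$, observe that by definition \eqref{DEF:vn}, $\mathsf{V}_{n+1} = \{u \in \Vars{\varphi} \mid Mu \cap M\mathsf{V}_n \neq \varnothing\} = \varnothing$ since $M\mathsf{V}_n = \varnothing$; and then \eqref{DEF:mn} forces $M_{n+1}v = M_n v$ for every $v$, because the ``otherwise'' branch applies to all variables. Hence $M_{n+1} = M_n = M_k$, completing the induction.

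For part \ref{LEMMA:endB}, the key idea is to extract from the assumption $\mathsf{V}_k \neq \varnothing$ a strictly increasing $\in$-chain of length $k+1$ among the images $Mu$, for suitable variables $u \in \Vars{\varphi}$. Concretely, pick any $u_k \in \mathsf{V}_k$. If $k \geq 1$, then by \eqref{DEF:vn} we have $Mu_k \cap M\mathsf{V}_{k-1} \neq \varnothing$, so some $u_{k-1} \in \mathsf{V}_{k-1}$ satisfies $Mu_{k-1} \in Mu_k$. Iterating this choice downwards, we obtain a sequence $u_0, u_1, \ldots, u_k$ with $u_i \in \mathsf{V}_i$ and $Mu_{i-1} \in Mu_i$ for every $i \in \{1,\ldots,k\}$, so that
\[
Mu_0\ \in\ Mu_1\ \in\ \cdots\ \in\ Mu_k.
\]

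From such a chain I would derive the two bounds separately. First, for the cardinality bound, note that by the well-foundedness of $\in$ the sets $Mu_0, \ldots, Mu_k$ must be pairwise distinct (any repetition would yield an $\in$-cycle), whence the variables $u_0, \ldots, u_k$ are distinct as well, giving $k+1 \leq |\Vars{\varphi}|$. Second, for the rank bound, since $Mu_{i-1} \in Mu_i$ entails $\rk{Mu_{i-1}} < \rk{Mu_i}$, a straightforward induction along the chain yields $\rk{Mu_k} \geq k$; combining this with $\rk{Mu_k} \leq \rk{M}$ (which holds by definition of $\rk{M}$, since $u_k \in \Vars{\varphi} = \dom{M}$) gives $k \leq \rk{M}$. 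The case $k = 0$ is trivial, as the inequality $0 \leq \min(|\Vars{\varphi}| - 1, \rk{M})$ holds whenever $\mathsf{V}_0 \neq \varnothing$ (which forces $\Vars{\varphi} \neq \varnothing$).

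I do not expect any real obstacle here: the statement essentially says that the recursion \eqref{DEF:vn}--\eqref{DEF:mn} follows the well-founded $\in$-structure of the initial assignment $M$. The only delicate point is to make sure that the downward extraction of $u_{i-1}$ from $u_i$ is legitimate at every step, which is guaranteed by the non-emptiness of $\mathsf{V}_i \cap \{v \mid Mv \in Mu_i\}$ directly from \eqref{DEF:vn}.
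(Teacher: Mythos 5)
Your proposal is correct and follows essentially the same route as the paper's proof: part \ref{LEMMA:endA} by iterating the definitions \eqref{DEF:vn}--\eqref{DEF:mn}, and part \ref{LEMMA:endB} by extracting a descending chain $Mu_0 \in Mu_1 \in \cdots \in Mu_k$ from a witness in $\mathsf{V}_k$, then invoking well-foundedness of $\in$ for the cardinality bound and rank monotonicity along the chain for the rank bound. Your explicit treatment of the $k=0$ case is a minor addition the paper leaves implicit.
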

\begin{proof}
If $\mathsf{V}_k = \emptyset$, then $\mathsf{V}_{k+1} = \emptyset$ and $M_{k+1} = M_{k}$ by \eqref{DEF:vn} and \eqref{DEF:mn}, respectively. By iterating the same argument, one can easily prove that $\mathsf{V}_{n} = \emptyset$ and $M_{n} = M_{k}$, for all $n \in \nats$, proving \ref{LEMMA:endA}.

As for \ref{LEMMA:endB}, we preliminarily observe that, by \eqref{DEF:vn}, for all $v \in \Vars{\varphi}$ and $n \geq 1$ we have 
\begin{equation}\label{v in Vn}
v \in \mathsf{V}_{n} \: \implies \: (\exists u \in \mathsf{V}_{n-1}) Mu \in Mv.
\end{equation}
Thus, if  $\mathsf{V}_{k} \neq \emptyset$, by picking any $v_{k} \in \mathsf{V}_{k}$ and by repeatedly applying \eqref{v in Vn}, it follows that there exist $v_{0},v_{1},\ldots,v_{k-1} \in \Vars{\varphi}$ such that
\begin{equation}\label{chain of Mv's}
Mv_{0}\ \in\ Mv_{1}\ \in\ \cdots\ \in\ Mv_{k-1}\ \in\ Mv_{k}.
\end{equation}
By the well-foundedness of $\in$, the variables $v_{0},v_{1},\ldots,v_{k-1},v_{k}$ must be pairwise distinct. Hence, $k+1 \leq |\Vars{\varphi}|$. In addition, \eqref{chain of Mv's} also yields $k \leq \rk{Mv_{k}} \leq \rk{M}$. Thus, \eqref{eq:LEMMA:endB} follows, proving \ref{LEMMA:endB}.
\end{proof}

The preceding lemma yields immediately the following result.
\begin{corollary}\label{co:wasLEMMA:endC}
For all $h,k > \min \big( |\Vars{\varphi}|-1, \rk{M} \big)$, we have $M_{h} = M_{k}$.
\end{corollary}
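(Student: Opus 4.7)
The corollary is an immediate consequence of Lemma~\ref{LEMMA:end}, so my plan is essentially to glue the two halves together via a contrapositive argument. For notational convenience, I would first set $m \defAs \min(|\Vars{\varphi}|-1, \rk{M})$, which is a well-defined nonnegative integer since $|\Vars{\varphi}| \geq 1$ and ranks are ordinals bounded below by $0$.

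Next, I would take the contrapositive of Lemma~\ref{LEMMA:end}\ref{LEMMA:endB}: the bound $k \leq m$ whenever $\mathsf{V}_k \neq \emptyset$ is logically equivalent to saying that $\mathsf{V}_k = \emptyset$ whenever $k > m$. Instantiating this at the smallest admissible index, namely $k = m+1$, yields $\mathsf{V}_{m+1} = \emptyset$.

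The final step is to invoke Lemma~\ref{LEMMA:end}\ref{LEMMA:endA} with $k = m+1$: from $\mathsf{V}_{m+1} = \emptyset$ we infer $M_n = M_{m+1}$ for every $n \geq m+1$. Hence for any pair $h, k > m$ (equivalently, $h, k \geq m+1$) we conclude $M_h = M_{m+1} = M_k$, which is precisely the statement of the corollary.

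I do not foresee any obstacle: Lemma~\ref{LEMMA:end} does all the real work, and the corollary merely repackages parts (a) and (b) into a ``stabilization past index $m$'' statement that will be convenient to cite later when $M^{*} \defAs M_{\overline{n}}$ (with $\overline{n} \defAs |\Vars{\varphi}|$) is shown to model $\varphi^{*}$.
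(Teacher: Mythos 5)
Your proof is correct and matches the paper's intent exactly: the paper simply remarks that the corollary follows immediately from the preceding lemma, and your argument (contrapositive of part (b) to get $\mathsf{V}_{m+1} = \emptyset$, then part (a) to get stabilization from index $m+1$ onward) is precisely the ``immediate'' derivation being alluded to. No gaps.
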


Letting $\overline{n} \coloneqq |\Vars{\varphi}|$, Corollary~\ref{co:wasLEMMA:endC} implies that $M_{n} = M_{\overline{n}}$, for all $n \geq \overline{n}$.

\smallskip

Next we prove a number of technical lemmas that will culminate in the proof that
\[
M_{\overline{n}}\ \models\ \varphi\ \wedge\ \bigwedge \mathcal{E}^{-}_{M}\ \wedge\ \overline{x} \neq \overline{y},
\]
where $\overline{x} = \overline{y}$ is the designated equality of $\mathcal{E}$. Thus, we will have that
\[
\mathcal{E}^{-}_{M} \subsetneq \mathcal{E}^{-}_{M_{\overline{n}}} \quad \text{and} \quad |\mathcal{E}^{+}_{M_{\overline{n}}}| < |\mathcal{E}^{+}_{M}|,
\]
contradicting the minimality of $|\mathcal{E}^{+}_{M}|$. Hence, the convexity of \MLS will follow, since our initial assumption on $\varphi$ and $\mathcal{E}$ that conditions \ref{C1} and \ref{C2} hold will be proved to be untenable.

The following lemma provides some useful bounds on the rank of $M_{n}v$, for $n \in \nats$ and $v \in \Vars{\varphi}$.
\begin{lemma}\label{LEMMA:rank}
For all $n \in \nats$ and $v \in \Vars{\varphi}$, we have
\begin{enumerate}[label=-]
\item $\rk{M_{n}v} = \rk{\myset} + n + 1$, if $v \in \mathsf{V}_n$,

\item $\rk{M_{n}v} \leq \rk{\myset} + n$, if $v \notin \mathsf{V}_n$.
\end{enumerate}
\end{lemma}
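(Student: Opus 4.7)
The proof proceeds by induction on $n$, using the familiar characterization $\rk{S} = \sup\{\rk{w}+1 \st w \in S\}$ for any set $S$, together with the defining property $\rk{\myset} > \rk{M}$ of the special set $\myset$.

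\textbf{Base case ($n=0$).} Consider the two defining clauses of $M_0$ in~\eqref{DEF:m0}. If $v \in \mathsf{V}_0$, then $M_0 v = Mv \cup \{\myset\}$. The set $\{\myset\}$ has rank $\rk{\myset}+1$, while every $w \in Mv$ satisfies $\rk{w}+1 \leq \rk{Mv} \leq \rk{M} < \rk{\myset}$. Hence the sup in the rank formula is attained at $\myset$, giving $\rk{M_0 v} = \rk{\myset}+1$. If instead $v \notin \mathsf{V}_0$, then $M_0 v = Mv$, so $\rk{M_0 v} = \rk{Mv} \leq \rk{M} < \rk{\myset}$, which gives $\rk{M_0 v} \leq \rk{\myset}+0$.

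\textbf{Inductive step ($n \geq 1$).} Assume the statement for $n-1$. Suppose first that $v \in \mathsf{V}_n$. By~\eqref{DEF:vn}, there is some $u \in \mathsf{V}_{n-1}$ with $Mu \in Mv$, and then the definition~\eqref{DEF:mn} places $M_{n-1}u$ as an element of $M_n v$; by the inductive hypothesis $\rk{M_{n-1}u} = \rk{\myset}+n$, so $\rk{M_n v} \geq \rk{\myset}+n+1$. For the matching upper bound, I would split the elements of $M_n v$ according to the two pieces of the union in~\eqref{DEF:mn}: every $w \in M_{n-1}v$ contributes $\rk{w}+1 \leq \rk{M_{n-1}v}$, which by the inductive hypothesis is at most $\rk{\myset}+n$ (taking either the $\mathsf{V}_{n-1}$ or the non-$\mathsf{V}_{n-1}$ case); and every added set $M_{n-1}u$ with $u \in \mathsf{V}_{n-1}$ contributes exactly $\rk{M_{n-1}u}+1 = \rk{\myset}+n+1$. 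Taking the supremum yields the reverse inequality. If instead $v \notin \mathsf{V}_n$, then $M_n v = M_{n-1} v$, and the two subcases $v \in \mathsf{V}_{n-1}$ resp.\ $v \notin \mathsf{V}_{n-1}$ give $\rk{M_n v} = \rk{\myset}+n$ resp.\ $\rk{M_n v} \leq \rk{\myset}+n-1$, both within the required bound $\rk{\myset}+n$.

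\textbf{Anticipated obstacle.} The only delicate point is the equality (rather than just inequality) in the $v \in \mathsf{V}_n$ case: one must be sure that the newly adjoined elements $M_{n-1}u$ (for $u \in \mathsf{V}_{n-1}$ with $Mu \in Mv$) have rank exactly $\rk{\myset}+n$ and that no pre-existing element of $M_{n-1}v$ beats that. The first fact is precisely the inductive hypothesis applied in the $\mathsf{V}_{n-1}$-case; the second follows by applying the inductive hypothesis to $v$ itself, in either the $\mathsf{V}_{n-1}$ or non-$\mathsf{V}_{n-1}$ subcase, which caps $\rk{M_{n-1}v}$ at $\rk{\myset}+n$. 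Once these two facts are lined up, the rank computation closes cleanly, and the monotonicity furnished by Lemma~\ref{LEMMA:subsetmodel}\ref{LEMMA:subsetmodelA} ensures the induction is coherent across $n$.
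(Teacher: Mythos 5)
Your proof is correct and follows essentially the same route as the paper's: induction on $n$, with the lower bound in the $v \in \mathsf{V}_n$ case coming from a witness $u \in \mathsf{V}_{n-1}$ with $Mu \in Mv$ (whose image $M_{n-1}u$ has rank exactly $\rk{\myset}+n$ by the inductive hypothesis) and the matching upper bound from applying the inductive hypothesis to both pieces of the union in \eqref{DEF:mn}. The paper phrases the rank computation as a max over the two constituents of the union rather than via the pointwise supremum formula, but the content is identical.
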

\begin{proof}
We proceed by induction on $n$. For $n = 0$ and $v \in \mathsf{V}_0$, from \eqref{DEF:m0} we have $M_{0}v = Mv\ \cup\ \{\myset\}$. Hence, $\rk{M_{0}v} = \max\{\rk{Mv},\rk{\{\myset\}}\} = \rk{\myset}+1$, since $\rk{\myset} > \rk{Mv}$. On the other hand, if $v \notin \mathsf{V}_0$, then $\rk{M_{0}v} = \rk{Mv} < \rk{\myset}$.

Next, let $n > 0$ and $v \in \mathsf{V}_n$. By \eqref{DEF:mn}, we have:
\begin{equation}\label{rk Mn v}
\rk{M_{n}v} = \max \big( \rk{M_{n-1}v}, \rk{M_{n-1} \{z \in \mathsf{V}_{n-1} \st Mz \in Mv\}} \big).
\end{equation}
By inductive hypothesis, we readily have
\begin{enumerate}[label=-]
\item $\rk{M_{n-1}v} \leq \rk{\myset} + n$, and

\item $\rk{M_{n-1} \{z \in \mathsf{V}_{n-1} \st Mz \in Mv\}} \leq \rk{\myset} + n + 1$.
\end{enumerate}
In addition, since $v \in \mathsf{V}_{n}$, then by \eqref{DEF:vn}, $Mu \in Mv$ for some $u \in \mathsf{V}_{n-1}$. Hence, again by inductive hypothesis, $\rk{Mu} = \rk{\myset} +n$, and since $u \in \{z \in \mathsf{V}_{n-1} \st Mz \in Mv\}$, we have 
\[
\rk{M_{n-1} \{z \in \mathsf{V}_{n-1} \st Mz \in Mv\}} = \rk{\myset} + n + 1.
\]
Thus, by \eqref{rk Mn v}, we get $\rk{M_{n}v} = \rk{\myset} + n + 1$.

On the other hand, if $v \notin \mathsf{V}_n$, then by \eqref{DEF:mn} and by the inductive hypothesis we have $\rk{M_{n}v} = \rk{M_{n-1}v} \leq \rk{\myset} + n$.
\end{proof}

Next we prove that the set $\myset$ can enter $M_{n}$ only when $n=0$.
\begin{lemma}\label{merged lemmas}
For all $n \in \nats$ and $v \in \Varsf$, we have:
\begin{enumerate}[label=(\alph*)]\item\label{new lemma} $M_{n}v \neq \myset$;

\item\label{LEMMA:myset} $\myset \in M_{n}x \quad \Longleftrightarrow \quad \myset \in M_{0}x$.
\end{enumerate}
\end{lemma}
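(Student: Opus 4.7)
My plan is to prove both parts simultaneously by induction on $n$, with part (a) driving part (b).

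For part (a), I would first handle the base case $n=0$ by splitting on whether $v \in \mathsf{V}_0$. If $v \notin \mathsf{V}_0$, then $M_0 v = Mv$, whose rank is at most $\rk{M} < \rk{\myset}$, so $M_0 v \neq \myset$ by distinct ranks. If $v \in \mathsf{V}_0$, then $M_0 v = Mv \cup \{\myset\}$ contains $\myset$ as an element, so by the Foundation axiom (since $\myset \notin \myset$) we must have $M_0 v \neq \myset$. For the inductive step with $n \geq 1$, if $v \notin \mathsf{V}_n$ then $M_n v = M_{n-1} v \neq \myset$ directly from the inductive hypothesis. If $v \in \mathsf{V}_n$, then Lemma~\ref{LEMMA:rank} gives $\rk{M_n v} = \rk{\myset} + n + 1 > \rk{\myset}$, so again $M_n v \neq \myset$ by rank.

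For part (b), the direction $\Leftarrow$ follows immediately from the monotonicity of the sequence asserted in Lemma~\ref{LEMMA:subsetmodel}\ref{LEMMA:subsetmodelA}, which gives $M_0 x \subseteq M_n x$. For the direction $\Rightarrow$, I would induct on $n$, the case $n=0$ being trivial. For $n \geq 1$, it suffices to show $\myset \in M_n x \implies \myset \in M_{n-1} x$, and then apply the inductive hypothesis. If $x \notin \mathsf{V}_n$ this is immediate from $M_n x = M_{n-1} x$. Otherwise, by \eqref{DEF:mn}, any element of $M_n x \setminus M_{n-1} x$ has the form $M_{n-1} u$ for some $u \in \mathsf{V}_{n-1}$ with $Mu \in Mx$; but part (a), already established for index $n-1$, guarantees $M_{n-1} u \neq \myset$, ruling out that $\myset$ is a newly added element.

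The main obstacle I foresee is the careful interleaving of the two inductions: part (b) at step $n$ must cite part (a) at step $n-1$ (to preclude $M_{n-1} u = \myset$), while part (a) at step $n$ cites only the rank computation and monotonicity. The natural way to organize this is to prove (a) in full first, using Lemma~\ref{LEMMA:rank} as a black box, and then to prove (b) by a separate induction that invokes (a). A subtle point is verifying that no element introduced into $M_n x$ during the construction step \eqref{DEF:mn} accidentally equals $\myset$; this is exactly what part (a) delivers, confirming that the only opportunity for $\myset$ to be added to any $M_\cdot x$ is at the initial stage $M_0$.
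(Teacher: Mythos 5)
Your proof is correct and follows essentially the same route as the paper: part~(a) by induction on $n$, using Lemma~\ref{LEMMA:rank} for the $v \in \mathsf{V}_n$ cases, and part~(b) by a second induction in which part~(a) rules out $\myset$ entering as one of the newly added elements $M_{n-1}u$. The only cosmetic difference is that in the base case of (a) with $v \in \mathsf{V}_0$ you invoke Foundation ($\myset \notin \myset$), where the paper again uses the rank bound $\rk{M_{0}v} = \rk{\myset}+1 > \rk{\myset}$; both work.
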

\begin{proof}Concerning \ref{new lemma}, we proceed by induction on $n$.

For $n = 0$, by \eqref{DEF:m0} we have:
\begin{enumerate}[label=-]
\item $\rk{M_{0}v} = \rk{\myset} +1$, if $v \in \mathsf{V}_{0}$ \quad (by Lemma~\ref{LEMMA:rank});

\item $\rk{M_{0}v} = \rk{Mv} < \rk{\myset}$, if $v \notin \mathsf{V}_{0}$.
\end{enumerate}
In both cases, it follows that $M_{0}v \neq \myset$.

For the inductive step, let $n > 1$. If $v \in \mathsf{V}_{n}$, then by Lemma~\ref{LEMMA:rank} we have $\rk{M_{n}v} = \rk{\myset} + n + 1$, and therefore $M_{n}v \neq \myset$. On the other hand, if $v \notin \mathsf{V}_{n}$, then $M_{n}v = M_{n-1}v \neq \myset$, by \eqref{DEF:mn} and by the inductive hypothesis.

\smallskip

Next we prove \ref{LEMMA:myset} by induction on $n$.

The base case $n=0$ is trivial.

For the inductive step, let $n > 0$. If $\myset \in M_{0}x$, then Lemma~\ref{LEMMA:subsetmodel}\ref{LEMMA:subsetmodelA} yields readily $\myset \in M_{n}x$. Conversely, let $\myset \in M_{n}x$. If $x \notin \mathsf{V}_{n}$, then by \eqref{DEF:mn} we have $\myset \in M_{n}x = M_{n-1}x$, and therefore by inductive hypothesis $s \in M_{0}x$. On the other hand, if $x \in \mathsf{V}_{n}$, then again by \eqref{DEF:mn} we have
\[
\myset\ \in\ M_{n}x\ =\ M_{n-1}x\ \cup\ M_{n-1} \{y \in \mathsf{V}_{n-1} \st My \in Mx\}.
\]
In view of \ref{new lemma}, the latter formula yields $\myset \in M_{n-1}x$, and therefore $s \in M_{0}x$ follows again by inductive hypothesis, completing the proof of \ref{LEMMA:myset}, and in turn of the lemma.
\end{proof}

\newcommand{\myq}{\mathfrak{q}}The following lemma proves that, at each construction step of the assignments $M_{n}$'s, only elements of rank at least $\rk{\myset}$ can enter into play.
\begin{lemma}\label{preCorollary:noteq}
For every set $\myq \in M_{n}x$, for some $x \in \Vars{\varphi}$ and $n \in \nats$,  if $\rk{\myq} < \rk{\myset}$ then $\myq \in Mx$.
\end{lemma}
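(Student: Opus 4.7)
The plan is to prove the statement by induction on $n$, leveraging the precise rank information already established in Lemma~\ref{LEMMA:rank}. The key insight is that the only ``new'' elements that ever get inserted during the construction are either the special set $\myset$ itself (in the base step) or values $M_{n-1}u$ for $u \in \mathsf{V}_{n-1}$ (in the inductive step), and both kinds of elements have rank $\geq \rk{\myset}$.

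For the base case $n = 0$, by \eqref{DEF:m0} we have $M_{0}x \subseteq Mx \cup \{\myset\}$. Hence a set $\myq \in M_{0}x$ is either already in $Mx$, in which case we are done, or equals $\myset$, in which case $\rk{\myq} = \rk{\myset}$, contradicting the hypothesis $\rk{\myq} < \rk{\myset}$.

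For the inductive step $n > 0$, I would split on whether $x \in \mathsf{V}_n$. If $x \notin \mathsf{V}_n$, then $M_n x = M_{n-1} x$ by \eqref{DEF:mn}, so $\myq \in M_{n-1}x$ and the inductive hypothesis gives $\myq \in Mx$. If $x \in \mathsf{V}_n$, then by \eqref{DEF:mn}
\[
M_{n}x\ =\ M_{n-1}x\ \cup\ M_{n-1}\{u \in \mathsf{V}_{n-1} \st Mu \in Mx\}.
\]
If $\myq \in M_{n-1}x$, the inductive hypothesis again yields $\myq \in Mx$. Otherwise $\myq = M_{n-1}u$ for some $u \in \mathsf{V}_{n-1}$, and by Lemma~\ref{LEMMA:rank} we have $\rk{\myq} = \rk{M_{n-1}u} = \rk{\myset} + n \geq \rk{\myset}$, contradicting $\rk{\myq} < \rk{\myset}$.

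There is no real obstacle here; the only thing to be careful about is correctly handling the second summand in the inductive definition of $M_n x$, which is where Lemma~\ref{LEMMA:rank} does all the work by pinning down the exact rank of $M_{n-1}u$ when $u \in \mathsf{V}_{n-1}$. The conclusion is that any $\myq$ of rank strictly below $\rk{\myset}$ encountered at any stage must have already been present in $Mx$.
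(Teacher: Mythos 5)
Your proof is correct and takes essentially the same route as the paper: both arguments boil down to the observation that every element of $M_{n}x \setminus Mx$ is either $\myset$ or some $M_{k}u$ with $u \in \mathsf{V}_{k}$, and hence has rank at least $\rk{\myset}$ by Lemma~\ref{LEMMA:rank}. The only cosmetic difference is that the paper invokes the already-established containment of Lemma~\ref{LEMMA:subsetmodel}\ref{LEMMA:subsetmodelB} directly, whereas you unfold that containment into an explicit induction on $n$.
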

\begin{proof}
Let $x \in \Vars{\varphi}$, $n \in \nats$, and $\myq \in M_{n}x$, with $\rk{\myq} < \rk{\myset}$.  From Lemma~\ref{LEMMA:subsetmodel}\ref{LEMMA:subsetmodelB}, we have 
\[
M_{n}x\ \subseteq\ M x\ \cup\ \{\myset\}\ \cup\ \bigcup_{k=0}^{n-1} M_{k} \{y \in \mathsf{V}_{k} \st My \in Mx\}.
\]
Since, by Lemma~\ref{LEMMA:rank}, the rank of each member of $\{\myset\} \cup \bigcup_{k=0}^{n-1} M_{k} \{y \in \mathsf{V}_{k} \st My \in Mx\}$ is greater than or equal to $\rk{\myset}$, then necessarily $\myq \in Mx$
\end{proof}

All the inequalities $x \neq y$ satisfied by $M$ are satisfied by every $M_{n}$, as proved in the following corollary.
\begin{corollary}\label{Corollary:noteq}
If $Mx \neq My$, for some $x, y \in \Varsf$, then $M_{n}x \neq M_{n}y$, for every $n \in \nats$.
\end{corollary}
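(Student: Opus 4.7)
The plan is straightforward: use the rank-controlling Lemma~\ref{preCorollary:noteq} to show that the symmetric difference witnessing $Mx \neq My$ survives into $M_n x$ and $M_n y$.

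More precisely, since $Mx \neq My$, there is some $\myq$ lying in exactly one of $Mx$ and $My$. Without loss of generality, assume $\myq \in Mx \setminus My$ (the other case is symmetric). I would first observe that $\myq$ has low rank: since $\myq \in Mx$, we have $\rk{\myq} < \rk{Mx} \leq \rk{M} < \rk{\myset}$, where the last strict inequality holds by our choice of $\myset$ at the start of the construction.

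Next, I would argue the two sides separately. By Lemma~\ref{LEMMA:subsetmodel}\ref{LEMMA:subsetmodelA}, $Mx \subseteq M_n x$, hence $\myq \in M_n x$. For the other side, suppose toward a contradiction that $\myq \in M_n y$. Since $\rk{\myq} < \rk{\myset}$, Lemma~\ref{preCorollary:noteq} applied to $\myq \in M_n y$ forces $\myq \in My$, contradicting our choice of $\myq$. Therefore $\myq \in M_n x \setminus M_n y$, which yields $M_n x \neq M_n y$, as required.

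The argument is essentially a one-line application of the previous two results, so there is no real obstacle; the only subtlety is making sure the rank inequality $\rk{\myq} < \rk{\myset}$ is explicitly justified via $\rk{Mx} \leq \rk{M}$ so that Lemma~\ref{preCorollary:noteq} is applicable.
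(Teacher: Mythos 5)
Your proof is correct and follows essentially the same route as the paper's: pick $\myq \in Mx \setminus My$ (w.l.o.g.), note $\rk{\myq} < \rk{Mx} < \rk{\myset}$, use Lemma~\ref{LEMMA:subsetmodel}\ref{LEMMA:subsetmodelA} to get $\myq \in M_{n}x$, and the contrapositive of Lemma~\ref{preCorollary:noteq} to get $\myq \notin M_{n}y$. Your explicit justification of the rank bound via $\rk{Mx} \leq \rk{M}$ is a slightly more careful spelling-out of the same step.
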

\begin{proof}
W.l.o.g., let us assume that $Mx \nsubseteq My$, and let $\myq \in Mx \setminus My$. Also, let $n \in \nats$. By Lemma~\ref{LEMMA:subsetmodel}\ref{LEMMA:subsetmodelA}, $\myq \in M_{n}x$. Plainly, $\rk{\myq} < \rk{Mx} < \rk{\myset}$ and $\myq \notin My$. Thus, Lemma~\ref{preCorollary:noteq} yields $\myq \notin M_{n}y$, proving that $M_{n}x \neq M_{n}y$.
\end{proof}

To show that every membership $x \in y$ satisfied by $M$ is correctly modeled by $M_{\overline{n}}$, we will need the following result.
\begin{lemma}\label{LEMMA:A}
For all $n \in \nats$ and $x,y \in \Vars{\varphi}$, if $x \in \mathsf{V}_{n}$ and $Mx \in My$, then $y \in \mathsf{V}_{n+1}$ and $M_{n} x \in M_{n+1} y$.
\end{lemma}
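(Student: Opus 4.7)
The plan is to unwind the recursive definitions~\eqref{DEF:vn} and~\eqref{DEF:mn} directly; no induction or auxiliary construction should be needed, since the statement is essentially a one-step bookkeeping fact about how the enlargement procedure propagates the membership relation. Fix $n \in \nats$ and $x, y \in \Vars{\varphi}$ with $x \in \mathsf{V}_n$ and $Mx \in My$.

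The first step is to verify $y \in \mathsf{V}_{n+1}$. For this I would plug into~\eqref{DEF:vn} at index $n+1$: the hypothesis $x \in \mathsf{V}_n$ puts $Mx$ into $M \mathsf{V}_n$, and the hypothesis $Mx \in My$ then witnesses $My \cap M \mathsf{V}_n \neqnothing$, which is exactly the criterion for $y \in \mathsf{V}_{n+1}$.

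The second step, once $y \in \mathsf{V}_{n+1}$ is available, is to invoke the top case of~\eqref{DEF:mn} at index $n+1$, which gives the decomposition
\[
M_{n+1}y \;=\; M_n y \;\cup\; M_{n}\{u \in \mathsf{V}_n \st Mu \in My\}.
\]
By the standing hypotheses $x$ belongs to the set $\{u \in \mathsf{V}_n \st Mu \in My\}$, so $M_n x$ lies in the right-hand union, yielding $M_n x \in M_{n+1} y$ as required.

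I do not anticipate any real obstacle: the lemma is designed precisely so that the recursive clauses~\eqref{DEF:vn} and~\eqref{DEF:mn} match the hypotheses verbatim. Its importance lies downstream, where it will be combined with Corollary~\ref{Corollary:noteq} and Lemma~\ref{preCorollary:noteq} to show that the eventual assignment $M_{\overline{n}}$ validates every membership literal of $\varphi^{*}$ together with all the inequalities in $\mathcal{E}^{-}_{M} \cup \{\overline{x} \neq \overline{y}\}$.
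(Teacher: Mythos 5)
Your proof is correct and follows the same route as the paper's: both arguments simply unwind definitions \eqref{DEF:vn} and \eqref{DEF:mn} at index $n+1$, first obtaining $y \in \mathsf{V}_{n+1}$ from $Mx \in My \cap M\mathsf{V}_n$ and then reading off $M_n x \in M_{n+1}y$ from the top case of \eqref{DEF:mn}. The paper's version is merely terser; yours spells out the same two steps explicitly.
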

\begin{proof}
Let $n \in \nats$ and assume that $Mx \in My$, for some $x,y \in \Vars{\varphi}$, and that $x \in \mathsf{V}_{n}$. Then, by \eqref{DEF:vn}, $y \in \mathsf{V}_{n+1}$. In addition, from \eqref{DEF:mn}, the latter membership relation yields immediately that $M_{n} x \in M_{n+1} y$.
\end{proof}

We are now ready to prove our main lemma.
\begin{lemma}\label{LEMMA:model}
The assignment $M_{\overline{n}}$ satisfies $\varphi$.
\end{lemma}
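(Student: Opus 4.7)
The plan is to verify that each literal of $\varphi$---which, after the preliminary normalization, has one of the two forms $x \in y$ or $x = y \setminus z$---is satisfied by $M_{\overline{n}}$. Since $M \models \varphi$, the corresponding set-theoretic relation holds under $M$, and it suffices to propagate it across the enlargement.

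For a membership literal $x \in y$, we have $Mx \in My$ and, by Lemma~\ref{LEMMA:subsetmodel}\ref{LEMMA:subsetmodelA}, $My \subseteq M_{\overline{n}} y$. I distinguish two cases according to whether $x$ ever joins some $\mathsf{V}_n$. If $x \notin \mathsf{V}_n$ for every $n \in \nats$, then $M_n x = Mx$ for all $n$, so $M_{\overline{n}} x = Mx \in My \subseteq M_{\overline{n}} y$. Otherwise, let $n^{\star}$ be the largest index with $x \in \mathsf{V}_{n^{\star}}$, which is well-defined thanks to Corollary~\ref{co:wasLEMMA:endC}. Lemma~\ref{LEMMA:A} applied at $n = n^{\star}$ yields $y \in \mathsf{V}_{n^{\star}+1}$ and $M_{n^{\star}} x \in M_{n^{\star}+1} y$. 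Because $x \notin \mathsf{V}_k$ for every $k > n^{\star}$, definition \eqref{DEF:mn} gives $M_{\overline{n}} x = M_{n^{\star}} x$; together with $M_{n^{\star} + 1} y \subseteq M_{\overline{n}} y$ this yields the required $M_{\overline{n}} x \in M_{\overline{n}} y$.

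For a Boolean literal $x = y \setminus z$ I will prove, by induction on $n \geq 0$, the stronger invariant $M_n x = M_n y \setminus M_n z$; the case $n = \overline{n}$ is then what we need. In the base case, since $\overline{M} \models x = y \setminus z$, the element $\myt$ used in defining $\mathsf{V}_0$ satisfies $\myt \in \overline{M} x \Leftrightarrow (\myt \in \overline{M} y \wedge \myt \notin \overline{M} z)$, which by \eqref{DEF:v0} translates into $x \in \mathsf{V}_0 \Leftrightarrow (y \in \mathsf{V}_0 \wedge z \notin \mathsf{V}_0)$. Combining this with $Mx = My \setminus Mz$ and the fact that $\myset \notin Mu$ for every variable $u$ (by the rank condition $\rk{\myset} > \rk{M}$) makes $M_0 x = M_0 y \setminus M_0 z$ a straightforward element-by-element verification, separating the elements of $Mx$ from the possible singleton $\{\myset\}$.

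For the inductive step, set $A_v \defAs \{M_{n-1} u \mid u \in \mathsf{V}_{n-1},\ Mu \in Mv\}$ for $v \in \{x,y,z\}$, so that by \eqref{DEF:mn} each $M_n v$ equals $M_{n-1} v \cup A_v$ (with $A_v = \varnothing$ when $v \notin \mathsf{V}_n$). Lemma~\ref{LEMMA:rank} shows that the elements of $M_{n-1} v$ all have rank strictly less than $\rk{\myset} + n$, whereas every element of $A_v$ has rank exactly $\rk{\myset} + n$; this rank separation decomposes $M_n y \setminus M_n z$ as the disjoint union of $M_{n-1} y \setminus M_{n-1} z$ (which equals $M_{n-1} x$ by the induction hypothesis) and $A_y \setminus A_z$. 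It remains to show $A_x = A_y \setminus A_z$, which is the main obstacle: a common element $M_{n-1} u = M_{n-1} u'$ with $Mu \in Mx$ and $Mu' \in Mz$ would, via the contrapositive of Corollary~\ref{Corollary:noteq}, force $Mu = Mu'$, contradicting $Mx \cap Mz = \varnothing$; conversely, any $q \in A_y \setminus A_z$ admits a representative $u \in \mathsf{V}_{n-1}$ with $Mu \in My$ and $Mu \notin Mz$ (else $q \in A_z$), whence $Mu \in My \setminus Mz = Mx$ and $q \in A_x$. The induction is then complete, and taking $n = \overline{n}$ finishes the proof.
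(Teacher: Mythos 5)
Your proof is correct, and while the membership case follows the paper's argument essentially verbatim (maximal index $n^{\star}$ with $x \in \mathsf{V}_{n^{\star}}$, then Lemma~\ref{LEMMA:A} plus monotonicity), your treatment of the literals $x = y \setminus z$ is organized genuinely differently. The paper works directly at the final stage $\overline{n}$: it decomposes $M_{\overline{n}}x$ via Lemma~\ref{LEMMA:subsetmodel}\ref{LEMMA:subsetmodelB} into the three sources $Mx$, $\{\myset\}$, and $\bigcup_{k}M_{k}\{u \in \mathsf{V}_{k} \st Mu \in Mx\}$, and proves the two inclusions by a case analysis on which source an element $\myq$ comes from, invoking Lemma~\ref{preCorollary:noteq}, Lemma~\ref{merged lemmas}\ref{LEMMA:myset}, Lemma~\ref{LEMMA:A}, and Corollary~\ref{Corollary:noteq} as needed. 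You instead prove the stronger stepwise invariant $M_{n}x = M_{n}y \setminus M_{n}z$ by induction on $n$, using the rank stratification of Lemma~\ref{LEMMA:rank} to split each $M_{n}v$ into the ``old'' part $M_{n-1}v$ and the ``new'' layer $A_{v}$ of rank exactly $\rk{\myset}+n$, and then showing $A_{x} = A_{y}\setminus A_{z}$ via Corollary~\ref{Corollary:noteq} exactly where the paper uses it. The underlying ingredients are the same (rank separation and the preservation of inequalities), but your induction yields the stronger fact that \emph{every} intermediate $M_{n}$ satisfies the Boolean literals, whereas the paper's one-shot argument at $\overline{n}$ avoids having to restate and carry an invariant; both are valid, and yours arguably localizes the bookkeeping more cleanly at the cost of an extra induction. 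Two small presentational points: the well-definedness of $n^{\star}$ is really a consequence of Lemma~\ref{LEMMA:end}\ref{LEMMA:endB} (which bounds the indices $k$ with $\mathsf{V}_{k}\neq\emptyset$) rather than of Corollary~\ref{co:wasLEMMA:endC} alone, and the inclusion $M_{n^{\star}+1}y \subseteq M_{\overline{n}}y$ deserves the one-line justification that $n^{\star}+1 \leq \overline{n}$ (or that the sequence has stabilized); neither affects correctness.
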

\begin{proof}
We prove the lemma, by showing that $M_{\overline{n}}$ correctly models all the conjuncts in $\varphi$. We recall that, in view of the reduction process outlined in Section~\ref{SEC:MLSsemantics}, our formula $\varphi$ contains conjuncts of two types only, namely $x \in y$ and $x = y \setminus z$.

\paragraph{Conjuncts of type $x \in y$.} Let $x \in y$ occur in $\varphi$, so that $Mx \in My$ holds. If $x \notin \mathsf{V}_{n}$ for all $n \in \nats$, then $M_{\overline{n}} x = M x \in M y \subseteq M_{\overline{n}}y$ (by Lemma~\ref{LEMMA:subsetmodel}\ref{LEMMA:subsetmodelA}), from which $M_{\overline{n}} x \in M_{\overline{n}}y$ follows.\\
Conversely, if $x \in \mathsf{V}_{n}$, for some $n \in \nats$, we set $\overline{m} \defAs \max \{n \in \nats \st x \in \mathsf{V}_{n}\}$. In addition, since $Mx \in My$ and $x \in \mathsf{V}_{\overline{m}}$, Lemma~\ref{LEMMA:A} implies $y \in \mathsf{V}_{\overline{m}+1}$ and therefore, by Lemma~\ref{LEMMA:end}\ref{LEMMA:endB}, $\overline{m}+1 \leq |\Vars{\varphi}| - 1 \leq \overline{n}$. Thus, Lemma~\ref{LEMMA:A} again together with Lemma~\ref{LEMMA:subsetmodel}\ref{LEMMA:subsetmodelA} yields $M_{\overline{n}} x = M_{\overline{m}} x \in M_{\overline{m}+1} y \subseteq M_{\overline{n}} y$, from which $M_{\overline{n}}x \in M_{\overline{n}} y$ follows.

\paragraph{Conjuncts of type $x = y \setminus z$.} Let $x = y \setminus z$ occur in $\varphi$, so that $Mx = My \setminus Mz$ holds. We will prove that $M_{\overline{n}} \models x = y \setminus z$, by proving that $M_{\overline{n}}x \subseteq M_{\overline{n}}y \setminus M_{\overline{n}}z$ and $M_{\overline{n}}y \setminus M_{\overline{n}}z \subseteq M_{\overline{n}}x$ hold.\\[.2cm]
\emph{Proof of $M_{\overline{n}}x \subseteq M_{\overline{n}}y \setminus M_{\overline{n}}z$.} \quad From Lemma~\ref{LEMMA:subsetmodel}\ref{LEMMA:subsetmodelB}, we have:
\[
M_{\overline{n}}x\ \subseteq\ M x\ \cup\ \{\myset\}\ \cup\ \bigcup_{k=0}^{\overline{n}-1} M_{k} \{u \in \mathsf{V}_{k} \st Mu \in Mx\}.
\]
Let $\myq \in M_{\overline{n}} x$. We first consider that case in which $\myq \in M x$. Then $\myq \in M y \setminus M z$. Hence, by Lemma~\ref{LEMMA:subsetmodel}\ref{LEMMA:subsetmodelA}, $\myq \in M_{\overline{n}} y$. In addition, since $\myq \notin M z$ and $\rk{\myq} < \rk{\myset}$, Lemma~\ref{preCorollary:noteq} yields $\myq \notin M_{\overline{n}} z$. Thus, $\myq \in M_{\overline{n}} y \setminus M_{\overline{n}} z$.\\[.1cm]
Next, let $\myq = \myset$. Hence, $\myset \in M_{0} x$ (by Lemma~\ref{merged lemmas}\ref{LEMMA:myset}), so that $x \in \mathsf{V}_{0}$ (by \eqref{DEF:m0} and \eqref{DEF:v0}), and therefore $\myt \in \overline{M}x$. Since $\overline{M}x = \overline{M}y \setminus \overline{M}z$, then $\overline{M}x \subseteq \overline{M}y$, and so $\myt \in \overline{M}y$ and $\myt \notin \overline{M}z$; hence, $y \in \mathsf{V}_{0}$ and $z \notin \mathsf{V}_{0}$. Therefore $\myset \in M_{0}y \subseteq M_{\overline{n}}y$ (by \eqref{DEF:m0} and Lemma~\ref{LEMMA:subsetmodel}\ref{LEMMA:subsetmodelB}) and $\myset \notin M_{0}z$ (by \eqref{DEF:v0}). Thus, by Lemma~\ref{merged lemmas}\ref{LEMMA:myset}, $\myset \notin M_{\overline{n}}z$. In conclusion, if $\myq = \myset$ then $\myq \in M_{\overline{n}}y \setminus M_{\overline{n}}z$, as in the preceding case.\\[.1cm]
Finally, let $\myq = M_{k}u$, for some $0 \leq k < \overline{n}$ and $u \in \mathsf{V}_{k}$ such that $Mu \in Mx$. Recalling that $Mx = My \setminus Mz$, then $Mu \in My$, so that $M_{k}u \in M_{k+1}y \subseteq M_{\overline{n}}y$ (by Lemma~\ref{LEMMA:A}). In addition, $Mu \notin Mz$. By Lemma~\ref{LEMMA:rank}, $M_{k}u \notin Mz \cup \{\myset\}$. Since, by Lemma~\ref{LEMMA:subsetmodel}\ref{LEMMA:subsetmodelB},
\begin{equation}\label{double star}
M_{\overline{n}}z\ \subseteq\ M z\ \cup\ \{\myset\}\ \cup\ \bigcup_{k=0}^{\overline{n}-1} M_{k} \{v \in \mathsf{V}_{k} \st Mv \in Mz\},
\end{equation}
to prove that $M_{k}u \notin M_{\overline{n}}z$, it is sufficient to show that $M_{k}u \notin \bigcup_{h=0}^{\overline{n}-1} M_{h} \{v \in \mathsf{V}_{h} \st Mv \in Mz\}$. By way of contradiction, let us assume that $M_{k}u = M_{h}v$, for some $0\leq h < \overline{n}$ and $v \in \mathsf{V}_{h}$ such that $Mv \in Mz$. By Lemma~\ref{LEMMA:rank}, and since $u \in \mathsf{V}_{k}$, we must have $h = k$. Since $Mu \notin Mz$ while $Mv \in Mz$, we plainly have $Mu \neq Mv$. Hence, by Corollary~\ref{Corollary:noteq}, $M_{h}u \neq M_{h}v = M_{k}u$, which contradicts our preceding assumption $M_{k}u = M_{h}v$. Thus, $M_{k}u \notin \bigcup_{h=0}^{\overline{n}-1} M_{h} \{v \in \mathsf{V}_{h} \st Mv \in Mz\}$ holds. In view of $M_{k}u \notin Mz \cup \{\myset\}$ and \eqref{double star}, the latter equation implies $M_{k}u \notin M_{\overline{n}}z$, proving that $\myq \in M_{\overline{n}}y \setminus M_{\overline{n}}z$ even in the case in which $\myq \in \bigcup_{k=0}^{\overline{n}-1} M_{k} \{u \in \mathsf{V}_{k} \st Mu \in Mz\}$.

From the arbitrariness of $\myq \in M_{\overline{n}}x$, we conclude that $M_{\overline{n}}x \subseteq M_{\overline{n}}y \setminus M_{\overline{n}}z$ holds. \\[.2cm]
\emph{Proof of $M_{\overline{n}}y \setminus M_{\overline{n}}z \subseteq M_{\overline{n}}x $.} \quad Let us assume now that $\myq \in M_{\overline{n}}y \setminus M_{\overline{n}}z$, so that $\myq \in M_{\overline{n}}y$. Again from Lemma~\ref{LEMMA:subsetmodel}\ref{LEMMA:subsetmodelB}, we have:
\[
M_{\overline{n}}y\ \subseteq\ M y\ \cup\ \{\myset\}\ \cup\ \bigcup_{k=0}^{\overline{n}-1} M_{k} \{v \in \mathsf{V}_{k} \st Mv \in My\}.
\]
First we consider the case in which $\myq \in M y$, so that $\rk{\myq} < \rk{\myset}$. Since $\myq \notin M_{\overline{n}}z$, then by Lemma~\ref{LEMMA:subsetmodel}\ref{LEMMA:subsetmodelA} $\myq \notin M z$, and therefore $\myq \in My \setminus Mz = Mx \subseteq M_{\overline{n}}x$. Thus, $\myq \in M_{\overline{n}}x$.\\[.1cm]
Next, if $\myq = \myset$, then $\myset \in M_{\overline{n}}y \setminus M_{\overline{n}}z$. Thus, $\myset \in M_{0}y$ and $\myset \notin M_{0}z$ by Lemmas~\ref{merged lemmas}\ref{LEMMA:myset} and \ref{LEMMA:subsetmodel}\ref{LEMMA:subsetmodelA}, respectively. Hence, by \eqref{DEF:m0}, $y \in \mathsf{V}_{0}$ and $z \notin \mathsf{V}_{0}$, so that $\myt \in \overline{M}y \setminus \overline{M}z = \overline{M}x$ (since $\overline{M} \models \varphi$). In view of \eqref{DEF:v0}, the latter membership relation yields $x \in \mathsf{V}_{0}$. Thus $\myq = \myset \in M_{0}x \subseteq M_{\overline{n}}x$, which readily implies $\myq \in M_{\overline{n}}x$.\\[.1cm]
Finally, let us assume that $\myq = M_{k}v$, for some $0\leq k < \overline{n}$ such that $v \in \mathsf{V}_{k}$ and $Mv \in My$. Plainly, $Mv \notin Mz$, otherwise by Lemma~\ref{LEMMA:A} we should have $\myq = M_{k}v \in M_{k+1}z \subseteq M_{\overline{n}}z$, contradicting $\myq \in M_{\overline{n}}y \setminus M_{\overline{n}}z$. Thus, $Mv \in My \setminus Mz = Mx$, so that $Mv \in Mx$. But then, by Lemma~\ref{LEMMA:A} again, we get $\myq = M_{k}v \in M_{k+1}x \subseteq M_{\overline{n}}x$, from which $\myq \in M_{\overline{n}}x$ follows even in the last case.\\[.1cm]
Thus, in all cases we have $\myq \in M_{\overline{n}}x$. By the arbitrariness if $\myq$ in $M_{\overline{n}}y \setminus M_{\overline{n}}z$, we therefore obtain $M_{\overline{n}}y \setminus M_{\overline{n}}z \subseteq M_{\overline{n}}x$.\\[.2cm]
In view of the reverse inclusion $M_{\overline{n}}x \subseteq M_{\overline{n}}y \setminus M_{\overline{n}}z$ established earlier, the latter inclusion yields $M_{\overline{n}}x = M_{\overline{n}}y \setminus M_{\overline{n}}z$, namely $M_{\overline{n}} \models x = y \setminus z$. 

\paragraph{} Summing up, we have proved that the assignment $M_{\overline{n}}$ satisfies all the conjuncts of $\varphi$, and therefore $M_{\overline{n}}$ satisfies $\varphi$.
\end{proof}

Together with Corollary~\ref{Corollary:noteq}, the preceding lemma implies
\[
M_{\overline{n}}\ \models\ \varphi\ \wedge\ \bigwedge \mathcal{E}^{-}_{M}.
\]
To find a contradiction, it only remains to prove that $M_{\overline{n}} \models \overline{x} \neq \overline{y}$, where $\overline{x} = \overline{y}$ is the designated equality of $\mathcal{E}^{+}_{M}$, which we do next.
\begin{lemma}\label{LEMMA:difference}
The assignment $M_{\overline{n}}$ models the inequality $\overline{x} \neq \overline{y}$ correctly.
\end{lemma}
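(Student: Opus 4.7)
The proof should be a direct consequence of the careful construction of $M_0$ (and hence $M_{\overline n}$) together with the preservation lemma for membership of the distinguished element $\myset$. The plan is to exploit the asymmetry that was deliberately engineered at the Boolean phase: the element $\myt$ was chosen to belong to exactly one of $\overline{M}\overline{x}$ and $\overline{M}\overline{y}$, and this single-side membership is what will be propagated all the way to $M_{\overline n}$ through the already-established invariants.

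First, I would recall that $\overline{M}$ is a model for $\varphi \wedge \overline{x} \neq \overline{y}$ (whose existence is guaranteed by condition \ref{C2}), and that the element $\myt$ was picked so that $\myt$ belongs to exactly one of $\overline{M}\overline{x}$ and $\overline{M}\overline{y}$. Without loss of generality, assume $\myt \in \overline{M}\overline{x} \setminus \overline{M}\overline{y}$ (the other case is entirely symmetric). Then, by the defining equation \eqref{DEF:v0} for $\mathsf{V}_0$, we immediately obtain $\overline{x} \in \mathsf{V}_0$ and $\overline{y} \notin \mathsf{V}_0$.

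Next, I would appeal to the defining equation \eqref{DEF:m0} for $M_0$: since $\overline{x} \in \mathsf{V}_0$, we have $M_0\overline{x} = M\overline{x} \cup \{\myset\}$, and therefore $\myset \in M_0\overline{x}$. On the other hand, since $\overline{y} \notin \mathsf{V}_0$, we have $M_0\overline{y} = M\overline{y}$, and since $\rk{\myset} > \rk{M} \geq \rk{M\overline{y}}$, it follows that $\myset \notin M\overline{y} = M_0\overline{y}$.

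Finally, I would invoke Lemma~\ref{merged lemmas}\ref{LEMMA:myset}, which states that $\myset \in M_n x \Longleftrightarrow \myset \in M_0 x$ for every $n \in \nats$ and $x \in \Vars{\varphi}$. Specializing to $n = \overline{n}$ and to the variables $\overline{x}, \overline{y}$, we obtain $\myset \in M_{\overline n}\overline{x}$ and $\myset \notin M_{\overline n}\overline{y}$. Hence $M_{\overline n}\overline{x} \neq M_{\overline n}\overline{y}$, namely $M_{\overline n} \models \overline{x} \neq \overline{y}$, which is the desired conclusion. There is no genuine obstacle here: all the heavy lifting was already done in Lemmas~\ref{LEMMA:rank} and \ref{merged lemmas}, and this final step is essentially a one-line consequence of the asymmetric construction of $\mathsf{V}_0$.
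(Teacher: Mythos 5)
Your proof is correct and follows essentially the same route as the paper's: exploit the asymmetric choice of $\myt$ to place $\overline{x}$ in $\mathsf{V}_0$ but not $\overline{y}$, conclude $\myset \in M_0\overline{x} \setminus M_0\overline{y}$, and propagate this via Lemma~\ref{merged lemmas}\ref{LEMMA:myset} to $M_{\overline{n}}$. Your version merely adds a harmless ``without loss of generality'' and spells out the rank argument for $\myset \notin M_0\overline{y}$, which the paper leaves implicit.
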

\begin{proof}

We know that $\myt \in \overline{M}\overline{x} \setminus \overline{M}\overline{y}$, therefore $\overline{x} \in \mathsf{V}_0$ and $\overline{y} \notin \mathsf{V}_0$, $\myset \in M_{0}\overline{x} \setminus M_{0} \overline{y}$. Hence, from Lemma~\ref{merged lemmas}\ref{LEMMA:myset} it follows that $\myset \in M_{\overline{n}}\overline{x} \setminus M_{\overline{n}}\overline{y}$, proving that $M_{\overline{n}} \models \overline{x} \neq \overline{y}$.
\end{proof}

From Lemmas~\ref{LEMMA:model} and \ref{LEMMA:difference} and Corollary~\ref{Corollary:noteq}, we have:
\[
M_{\overline{n}}\ \models\ \varphi\ \wedge\ \bigwedge \mathcal{E}^{-}_{M}\ \wedge\ \overline{x} \neq \overline{y}.
\]
Setting $\mathcal{E}^{+}_{M_{\overline{n}}} \defAs \{ \ell \in \mathcal{E} \st M_{\overline{n}} \models \ell \}$ and $\mathcal{E}^{-}_{M_{\overline{n}}} \defAs \{ \neg\ell \st \ell \in \mathcal{E} \setminus  \mathcal{E}^{+}_{M_{\overline{n}}}\}$, we have $\mathcal{E}^{-}_{M} \subsetneq \mathcal{E}^{-}_{M_{\overline{n}}}$ and so $|\mathcal{E}^{+}_{M_{\overline{n}}}| < |\mathcal{E}^{+}_{M}|$, contradicting the minimality of $|\mathcal{E}^{+}_{M}|$ among all the set assignments that satisfy $\varphi$. Thus, our initial hypothesis that \MLS were not convex is inadmissible, and therefore we can conclude that:
\begin{theorem}
The theory $\MLS$ is convex.
\end{theorem}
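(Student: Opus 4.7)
The plan is a proof by contradiction via a minimal-counterexample argument. Suppose \MLS were not convex; then there would exist an \MLS-conjunction $\varphi$ and a nonempty finite set $\mathcal{E}$ of equalities satisfying \ref{C1} and \ref{C2}. Among all models of $\varphi$ I would pick one, $M$, making true the fewest equalities in $\mathcal{E}$; this number is positive by \ref{C1}, so I can fix a designated equality $\overline{x} = \overline{y}$ satisfied by $M$. The goal is to construct a new model $M^{*}$ of $\varphi$ that still satisfies every inequality in $\mathcal{E}^{-}_{M}$ but now \emph{falsifies} $\overline{x}=\overline{y}$, contradicting the minimality of $M$.

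To construct $M^{*}$, I would first perform a Boolean surgery on $M$. Using \ref{C2}, pick $\overline{M} \models \varphi \wedge \overline{x} \neq \overline{y}$, then select $\myt$ belonging to exactly one of $\overline{M}\overline{x}$, $\overline{M}\overline{y}$, and select a fresh witness $\myset$ with $\rk{\myset} > \rk{M}$. Form $M_{0}$ from $M$ by inserting $\myset$ into every $Mu$ such that $\myt \in \overline{M}u$. The key intuition is that $\overline{M}$ already satisfies every literal $u = v \setminus w$ of $\varphi$, so the pattern of sets into which $\myset$ is injected is Boolean-consistent; meanwhile, the high rank of $\myset$ guarantees it is new and cannot accidentally coincide with any preexisting element of any $Mv$.

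Next, I would repair memberships $x \in y$ of $\varphi$ possibly broken by the surgery. If the old value $Mv$ lay in some $Mw$ and $v$ was just modified, then $M_{0} v \neq Mv$, so $M_{0} v \notin M_{0} w = Mw$, and one must inject $M_{0} v$ into $M_{1} w$. Iterating, $M_{n+1} u$ is extended by $\{M_{n} v \mid v \text{ just modified and } Mv \in Mu\}$. By the well-foundedness of $\in$ any membership chain $Mv_{0} \in Mv_{1} \in \cdots$ with $v_{i} \in \Vars{\varphi}$ has length at most $|\Vars{\varphi}|$, so the process stabilizes after $\overline{n} \defAs |\Vars{\varphi}|$ steps.

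The main obstacle will be verifying that propagation does not wreck the difference literals $x = y \setminus z$. For this I expect to need careful rank bookkeeping: the sets entering $\mathsf{V}_{n}$ should acquire rank exactly $\rk{\myset}+n+1$, so freshly added elements at different stages or for different targets cannot conflate. Concretely, if some $M_{k}v$ were claimed to lie in both $M_{\overline{n}} y$ and $M_{\overline{n}} z$, rank uniqueness should force both memberships to have been generated at the same stage via the same preimage $v$, which under the invariant $Mv \in My \Leftrightarrow Mv \in Mz$ (implied by $Mx = My \setminus Mz$) is contradictory; the special case for $\myset$ itself reduces to $\myt \in \overline{M}y \Leftrightarrow \myt \in \overline{M}z$, which follows from $\overline{M} \models \varphi$. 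Once this is established, $M_{\overline{n}}$ is a model of $\varphi$; enlarging an assignment cannot identify previously distinct values, so $M_{\overline{n}}$ preserves all old inequalities; and $\myset$ witnesses $M_{\overline{n}} \overline{x} \neq M_{\overline{n}} \overline{y}$. This contradicts the minimality of $M$ and finishes the proof.
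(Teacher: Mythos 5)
Your proposal is correct and follows essentially the same route as the paper's proof: the minimal-model contradiction with a designated equality, the Boolean phase injecting a fresh set $\mathfrak{s}$ of rank exceeding $\rk{M}$ into exactly those $Mu$ with $\mathfrak{t} \in \overline{M}u$, the iterated membership repair stabilizing within $|\Vars{\varphi}|$ steps by well-foundedness of $\in$, and the rank-stratification plus preservation-of-distinctness argument for the literals $x = y \setminus z$ are precisely the steps the paper carries out in its sequence of lemmas. (One parenthetical slip: $Mx = My \setminus Mz$ does not imply $Mv \in My \leftrightarrow Mv \in Mz$; but the mechanism you actually invoke---ranks forcing any coincidence of propagated elements to occur at the same stage, and distinct old values yielding distinct new values---is exactly the paper's argument, so nothing essential is affected.)
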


We expect that the proof of convexity of \MLS can be suitably generalized to show that also the extension \MLSI of \MLS with literals of the form $x = \bigcap y$ is convex too, where $\bigcap$ is the \emph{general intersection operator}.\footnote{The decision problem for \MLSI has been solved in \cite{CanCut89b}.} We recall that the intended semantics of $x = \bigcap y$ is the following: for a given set assignment $M$, we have $M(x = \bigcap y) = \true$ if and only if $My \neq \emptyset$ and $Mx = \bigcap My$, namely $Mx = \bigcap_{\mathfrak{y} \in My} \mathfrak{y}$.
\bigskip

Several fragments of \MLS admit polynomial-time decision procedures, so they are very appealing in the context of combination of decision procedures \emph{\`a la} Nelson-Oppen. We briefly review them next.

\subsection{Polynomial fragments of \MLS}\label{SEC:relatedworks}

Convexity of \MLS is plainly inherited by all of its fragments. In \cite{CantoneDMO21} and \cite{CMO20}, we recently investigated them with the goal of spotting the polynomial ones, namely the fragments of \MLS endowed with polynomial-time satisfiability tests. Specifically, we examined all the sublanguages of the theories 
\[
\BSTbb \defAs \Theory{\Quote{\cup},\Quote{\cap},\Quote{\setminus},\,\Quote\eqnothing,\Quote{\neqnothing},\Quote{\textsf{Disj}},\Quote{\neg \textsf{Disj}},\Quote{\subseteq},\Quote{\not\subseteq},\Quote{=},\Quote{\neq}}
\qquad \text{and} \qquad 
\MSTbb  \defAs \MSTtheory{\Quote{\cup},\Quote{\cap},\Quote{\setminus},\in,\notin},
\]
where:
\begin{enumerate}[label=-]
\item \BSTbb (acronym for Boolean Set Theory) is the collection of all the conjunctions of literals of the types
\begin{align*}
s &\eqqnothing, &~~~& \phantom{\neg}\Disj{s}{t}, ~~~& s &\subseteq t, ~~~& s &= t, \\
s &\neqqnothing, &&  \neg\Disj{s}{t}, & s &\not\subseteq t, &s &\neq t,
\end{align*}
with $s$ and $t$ terms involving set variables and the Boolean operators $\Quote{\cup}$, $\Quote{\cap}$, and $\Quote{\setminus}\:$, and where $\Disj{s}{t}$ stands for $s \cap t = \emptyset$;

\item \MSTbb (acronym for Membership Set Theory) is the collection of all the conjunctions of literals of the two types $s \in t$ and $s \notin t$, with, as above, $s$ and $t$  terms involving set variables and the Boolean operators $\Quote{\cup}$, $\Quote{\cap}$, and $\Quote{\setminus}\:$.
\end{enumerate}

More generally, we denote by $\Theory{\mathsf{op}_{1},\ldots,\,\mathsf{pred}_{1},\ldots}$ (resp., $\MSTtheory{\mathsf{op}_{1},\ldots,\,\mathsf{pred}_{1},\ldots}$) the subtheory of \BSTbb (resp., \MSTbb) involving only the set operators $\mathsf{op}_{1},\ldots$ drawn from the collection $\{\Quote{\cup},\Quote{\cap},\Quote{\setminus}\}$ and the predicate symbols $\mathsf{pred}_{1},\ldots$ drawn from $\{\Quote\eqnothing,\Quote{\neqnothing},\Quote{\textsf{Disj}},\Quote{\neg \textsf{Disj}},\Quote{\subseteq},\Quote{\not\subseteq},\Quote{=},\Quote{\neq}\}$ (resp., $\{\Quote\in,\Quote{\notin}\}$).

We figured out that the maximal polynomial fragments of \BSTbb and \MSTbb (namely the polynomial fragments of \BSTbb and \MSTbb that are not strictly contained in any polynomial fragment of \BSTbb and \MSTbb, respectively) are:
\begin{itemize}
\item $\BSTTh{\cup,\eqnothing,\neqnothing,\disj,\neg \disj,\nsubseteq,\neq}$,
\item $\BSTTh{\cup,\eqnothing,\neqnothing,\neg \disj,\subseteq,\nsubseteq,=,\neq}$,
\item $\BSTTh{\cap,\eqnothing,\neqnothing,\disj, \neg \disj,\subseteq,\nsubseteq,=,\neq}$\\
(all of which admitting a cubic-time satisfiability test) and

\item $\MSTTh{\cup,\in,\notin}$ (admitting a linear-time satisfiability test),

\item $\MSTTh{\cap,\in,\notin}$ (admitting a quadratic-time satisfiability test).
\end{itemize}

In addition, we further spotted the following non-maximal polynomial fragments of \BSTbb admitting sub-cubic satisfiability tests:
\begin{itemize}
\item $\BSTTh{\cup,\eqnothing,\neqnothing,\disj,\nsubseteq,\neq}$ (admitting a linear-time satisfiability test),

\item $\BSTTh{\cup,\disj,\neg \disj}$ (admitting a quadratic-time satisfiability test),

\item $\BSTTh{\cap,\eqnothing,\neqnothing,\disj, \neg \disj,\neq}$ (admitting a quadratic-time satisfiability test).
\end{itemize}

As already observed, all of the above fragments plainly inherit convexity from \MLS, so that, their decision procedures can be efficiently combined with the decision procedures of other convex theories with disjoint signatures within a Nelson-Oppen framework.

\smallskip

In the following section, we review various non-convex extensions of \MLS.

\section{Non-convex extensions of \MLS}\label{non-convexity}

To prove that some extensions of \MLS are non-convex, we rely on the following property. 

\begin{lemma}\label{lemma:non-convexity}
Let $\mathsf{T}$ be any extension of \MLS containing a conjunction $\varphi$ with a designated variable $\overline{x}$ such that, for some integer $k \geq 2$, we have:
\begin{enumerate}[label=-]
\item $\models \varphi\ \longrightarrow\ |\overline{x}| \leq k$,

\item $\varphi\ \wedge\ |\overline{x}| = k$ is satisfiable,
\end{enumerate}
where $|\overline{x}|$ stands for the cardinality of $\overline{x}$.
Then $\mathsf{T}$ is not convex.
\end{lemma}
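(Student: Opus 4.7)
The plan is to witness the failure of convexity by introducing $k+1$ fresh variables asserted to belong to $\overline{x}$; since $|\overline{x}| \leq k$ is forced by $\varphi$, a pigeonhole argument will make some equality among these variables unavoidable, but the symmetry of the situation (combined with $k \geq 2$) will prevent any specific equality from being implied.

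Concretely, I would pick $k+1$ fresh variables $y_0, y_1, \ldots, y_k$ (not occurring in $\varphi$) and consider
\[
\psi \defAs \varphi \wedge \bigwedge_{i=0}^{k} y_i \in \overline{x}, \qquad \mathcal{E} \defAs \bigl\{ y_i = y_j : 0 \leq i < j \leq k \bigr\}.
\]
Since $\mathsf{T}$ extends \MLS, every conjunct of $\psi$ is a $\mathsf{T}$-literal, and $\mathcal{E}$ is a nonempty finite set of equalities between variables. The first step is to verify $\models \psi \longrightarrow \bigvee \mathcal{E}$: any set assignment $M$ satisfying $\psi$ satisfies $\varphi$, whence $|M\overline{x}| \leq k$; but $\psi$ also forces $My_0, \ldots, My_k \in M\overline{x}$, producing $k+1$ members of a set of cardinality at most $k$, so by pigeonhole $My_i = My_j$ for some $0 \leq i < j \leq k$.

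The second and core step is to show, for each fixed pair $(i^{*}, j^{*})$ with $0 \leq i^{*} < j^{*} \leq k$, that $\psi \wedge y_{i^{*}} \neq y_{j^{*}}$ is satisfiable (so that $\not\models \psi \longrightarrow y_{i^{*}} = y_{j^{*}}$). I would pick an assignment $M_0$ witnessing the satisfiability of $\varphi \wedge |\overline{x}| = k$ granted by hypothesis; then $M_0\overline{x}$ has exactly $k$ distinct elements, and because $k \geq 2$ I may choose two of them, call them $a$ and $b$. Extend $M_0$ to an assignment $M$ on $\dom{M_0} \cup \{y_0, \ldots, y_k\}$ by setting $My_{i^{*}} \defAs a$, $My_{j^{*}} \defAs b$, and $My_i \defAs a$ for every other index $i$. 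Since the $y_i$'s are fresh, $\varphi$ is still satisfied by $M$; each $y_i$ is mapped into $M\overline{x}$, so the membership conjuncts of $\psi$ are satisfied; and $My_{i^{*}} = a \neq b = My_{j^{*}}$, as required.

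Combining the two parts, $\psi$ implies $\bigvee \mathcal{E}$ in $\mathsf{T}$ yet implies no single equality in $\mathcal{E}$, so $\mathsf{T}$ is not convex. The only subtle ingredient is the use of $k \geq 2$, needed to pick two distinct representatives $a, b$ in $M_0 \overline{x}$; the rest is bookkeeping to confirm that extending a model by fresh variables mapped into an existing set preserves all the original $\mathsf{T}$-literals and produces the desired inequality.
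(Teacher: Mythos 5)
Your proposal is correct and follows essentially the same route as the paper's (sketched) proof: adjoin $k+1$ fresh variables constrained to be members of $\overline{x}$, derive the disjunction of equalities by pigeonhole from $|\overline{x}|\leq k$, and refute each individual equality using a model of $\varphi\wedge|\overline{x}|=k$ together with $k\geq 2$. You in fact supply more detail than the paper, which merely asserts the satisfiability of each $\Phi\wedge x_i\neq x_j$, whereas you construct the witnessing assignment explicitly.
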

\begin{proof}[Proof sketch]
Given $\varphi$, $\overline{x}$, and $k$ as in the hypotheses, it is enough to set
$
\Phi \defAs~ \varphi\ \wedge\ \bigwedge_{i=1}^{k+1} x_{i} \in \overline{x},
$
where $x_{1},\ldots,x_{k+1}$ are pairwise distinct variables not occurring in $\varphi$. Then, we have:
\[
\models \Phi\ \longrightarrow\ \bigvee_{1 \leq i < j \leq k+1} x_{i} = x_{j}\, .
\]
In addition, each conjunction $\Phi\ \wedge\ x_{i} \neq x_{j}$, with $1 \leq i < j \leq k+1$, is satisfiable. Hence, none of the statements
\[
\models \Phi\ \longrightarrow\ x_{i} = x_{j}
\]
can hold, for $1 \leq i < j \leq k+1$. Thus, the theory $\mathsf{T}$ is not convex.
\end{proof}

Using Lemma~\ref{lemma:non-convexity}, we show next that the following extensions of \MLS are non-convex:
\begin{itemize}
    \item $\MLSS = \MLS\ +$ `$\{\cdot\}$': $\MLS$ extended with the singleton operator $x = \{y\}$ (see \cite{FOS80b}),
    
    \item $\MLSP = \MLS\ +$ `$\pow{\cdot}$': $\MLS$ extended with the powerset operator (see \cite{CFS85}),
    
    \item $\MLSU = \MLS\ +$ `$\bigcup{\cdot}$': $\MLS$ extended with the general union operator (see \cite{CFS87}),

        \item $\MLSC = \MLS\ +$ `$\times$': $\MLS$ extended with the Cartesian product operator.\footnote{The decision problem for \MLSC is still open.}
\end{itemize}

Concerning the theory \MLSS, let us consider the conjunction
\[
\varphi \coloneqq~~ x = \{y\}\ \wedge\ x' = \{y'\}\ \wedge\ \overline{x} = x \cup x'\,.
\]
Then, 
\begin{enumerate}[label=-]
\item for every model $M$ for $\varphi$, we have $M \overline{x} = \{My, My'\}$, so that $|M \overline{x}| \leq 2$ holds;

\item letting $\overline{M}$ be the set assignment for $\varphi$ such that 
\[
\overline{M}y = \emptyset, \qquad \overline{M}y' = \overline{M}x = \{\emptyset\}, \qquad \overline{M}x' = \{\{\emptyset\}\} , \qquad \overline{M}\overline{x} = \{\emptyset,\{\emptyset\}\},
\]
then $\overline{M}$ satisfies $\varphi$ and $|\overline{M}\overline{x}| = 2$.
\end{enumerate}
Thus, by Lemma~\ref{lemma:non-convexity}, \MLSS is non-convex. 

\medskip

Next, as for the theory \MLSP, let us consider the conjunction
\[
\varphi \coloneqq~~ x = \varnothing\ \wedge\ y = \pow{x}\ \wedge\ \overline{x} = \pow{y}\,.
\]
Then, $\varphi$ is plainly satisfiable and, for every set assignment $M$ satisfying $\varphi$, we have $M \overline{x} = \{\emptyset, \{\emptyset\}\}$, so that $|M \overline{x}| = 2$. Thus, by Lemma~\ref{lemma:non-convexity}, \MLSP is non-convex. 

\medskip

Concerning the fragment \MLSU, let us consider the conjunction
\[
\varphi \coloneqq~~ x = \varnothing\ \wedge\ \bigcup y = x\ \wedge\ \bigcup \overline{x} = y\,.
\]
Then, 
\begin{enumerate}[label=-]
\item for every set assignment $M$ satisfying $\varphi$, we have $M \overline{x} \subseteq \{\emptyset, \{\emptyset\}\}$ so that $|M \overline{x}| \leq 2$ holds;

\item letting $\overline{M}$ be the set assignment over $\Vars{\varphi}$ such that 
\[
\overline{M}x = \emptyset, \qquad My = \{\emptyset\}, \qquad \overline{M}\overline{x} = \{\emptyset,\{\emptyset\}\},
\]
we readily have that $\overline{M}$ satisfies $\varphi$ and $|\overline{M}\overline{x}| = 2$.
\end{enumerate}
Hence, by Lemma~\ref{lemma:non-convexity}, \MLSU is non-convex. 

\medskip

Since \MLSSP is an extension of non-convex theories, namely \MLSS and \MLSP, it follows immediately that \MLSSP is non-convex as well.

\medskip

Regarding the extension \MLSC of \MLS with the Cartesian product, we have
\begin{equation}\label{eq:MLSC not convex}
\models x \times y = \varnothing\ \longrightarrow\ (x=\varnothing\ \vee\ y=\varnothing).
\end{equation}
Since the two conjunctions $x \times y = \varnothing\ \wedge x \neq \varnothing$ and $x \times y = \varnothing\ \wedge y \neq \varnothing$ are clearly satisfiable, then
\[
\not\models x \times y = \varnothing\ \longrightarrow\ x=\varnothing \qquad \text{and} \qquad \not\models x \times y = \varnothing\ \longrightarrow\ y=\varnothing.
\]
Together with \eqref{eq:MLSC not convex}, the latter statements imply that \MLSC is non-convex.

By replacing in the above proof the Cartesian product $\times$ by the unordered Cartesian product $\otimes$, one can readily show that the extension \MLSuC of \MLS with the unordered Cartesian product $\otimes$ is non-convex too.

Finally, notice that the membership relator did not play any role in the above proof of non-convexity of \MLSC and \MLSuC. Therefore, by exactly the same argument as the above, one can show that the extensions \BSTC and \BSTuC of \BST with the Cartesian product and the unordered Cartesian product are non-convex.

Summarizing, we have proved:
\begin{lemma}
The theories \MLSS, \MLSP, \MLSSP, \MLSU, \MLSC, \MLSuC, \BSTC, and \BSTuC are all non-convex.
\end{lemma}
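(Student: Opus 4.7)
The plan is to establish non-convexity of each of the eight theories by exhibiting, in each case, a small witness that either falls under the hypotheses of Lemma~\ref{lemma:non-convexity} or directly produces a non-convex implication. First I would deal with \MLSS, \MLSP, and \MLSU uniformly via Lemma~\ref{lemma:non-convexity} with $k = 2$, by constructing a short conjunction $\varphi$ that pins the designated variable $\overline{x}$ to a two-element set in every model while still being realizable with $|M\overline{x}|=2$.

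For \MLSS, the plan is to combine two singleton literals $x = \{y\}$ and $x' = \{y'\}$ with $\overline{x} = x \cup x'$; this caps $|M\overline{x}|$ at $2$ in every model, and the bound is attained whenever $My \neq My'$. For \MLSP, iterating the powerset operator twice starting from $\varnothing$---via $x = \varnothing$, $y = \pow{x}$, $\overline{x} = \pow{y}$---pins $M\overline{x}$ to $\{\varnothing,\{\varnothing\}\}$ in \emph{every} model. For \MLSU, the chain $x = \varnothing$, $\bigcup y = x$, $\bigcup \overline{x} = y$ forces each element of $M\overline{x}$ to be a subset of $\{\varnothing\}$, whence $M\overline{x} \subseteq \{\varnothing,\{\varnothing\}\}$ and the bound $|M\overline{x}| \leq 2$ follows. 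In each case Lemma~\ref{lemma:non-convexity} then yields non-convexity of the theory.

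The Cartesian product extensions \MLSC, \MLSuC, \BSTC, \BSTuC call for a more direct argument, bypassing cardinality reasoning altogether. The key ingredient is the validity $\models x \times y = \varnothing \longrightarrow (x = \varnothing \vee y = \varnothing)$, combined with the observation that both $x \times y = \varnothing \wedge x \neq \varnothing$ and $x \times y = \varnothing \wedge y \neq \varnothing$ are satisfiable (take one of the two factors empty and the other arbitrary). This immediately violates convexity. The very same argument transfers verbatim to the unordered Cartesian product $\otimes$, and since the membership relator is never invoked, it also transfers to the Boolean fragments \BSTC and \BSTuC.

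Finally, \MLSSP extends \MLSS (or equivalently \MLSP), so its non-convexity is inherited for free from whichever of the two summands has already been shown to be non-convex. The only non-routine step in the whole plan is verifying that each witness conjunction actually enforces the claimed cardinality bound in every model; this is the place where the semantics of the new operator---singleton, powerset, or general union---must be invoked explicitly, and it is the main technical obstacle in the argument.
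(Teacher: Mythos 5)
Your proposal is correct and follows essentially the same route as the paper: the identical witness conjunctions for \MLSS, \MLSP, and \MLSU fed into Lemma~\ref{lemma:non-convexity} with $k=2$, the direct argument from $\models x \times y = \varnothing \longrightarrow (x=\varnothing \vee y=\varnothing)$ for the (unordered) Cartesian-product extensions, and inheritance for \MLSSP. No substantive differences to report.
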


\smallskip

\section{Conclusions}\label{SEC:conclusions}

In this paper, we have shown that the fragment of Zermelo-Fraenkel set theory called Multi-Level Syllogistic is convex. We also proved that most common extensions of \MLS studied within the field of computable set theory are non-convex.  Two possible exceptions are \MLSI, namely the extension of \MLS with the general intersection operator $\bigcap$, and the extension of \MLS with a finiteness predicate and some cardinality constraints. In fact, we conjecture that both extensions are convex, and we plan to prove it in the near future.

Although the decision problem for \MLS is \NP-complete, several of its fragments are endowed with polynomial decision procedures.
Due to the fact that convexity is inherited by all the fragments of \MLS, the ones with polynomial-time decision procedures are particularly interesting in view of their integration with other convex, stably infinite decidable theories with disjoint signatures (such as the theory of lists, linear arithmetic, etc.) within a Nelson-Oppen context.

We therefore intend to continue our investigation of sublanguages of $\MLS$ that admit a polynomial satisfiability procedure, with the ultimate goal of obtaining a complete taxonomy for the decision problem for $\MLS$ subtheories.

We also plan to explore extensions to the basic Nelson-Oppen procedure that overcome the restriction of stable infiniteness and/or of signature disjointness (such as, for instance, the politeness property \cite{RRZ05}, or the Noetherian property \cite{GNZ05}) that are particularly suited for combinations of decision procedures for fragments of set theory.

Finally, we intend to generalize to decidable fragments of pure set theory some combination results present in literature (such as the ones with integers and cardinals---see \cite{MultiZarba02,Zarba02,Zarba05}), which are currently limited to \emph{flat} sets of urelements only.

\section*{Acknowledgements}
We thank Eugenio Omodeo, University of Trieste, for his insightful comments.\\
We are also grateful to the anonymous reviewers for their observations and suggestions.

%\newpage
\bibliographystyle{eptcs}

\begin{thebibliography}{10}
\providecommand{\bibitemdeclare}[2]{}
\providecommand{\surnamestart}{}
\providecommand{\surnameend}{}
\providecommand{\urlprefix}{Available at }
\providecommand{\url}[1]{\texttt{#1}}
\providecommand{\href}[2]{\texttt{#2}}
\providecommand{\urlalt}[2]{\href{#1}{#2}}
\providecommand{\doi}[1]{doi:\urlalt{http://dx.doi.org/#1}{#1}}
\providecommand{\bibinfo}[2]{#2}

\bibitemdeclare{article}{BBRT18}
\bibitem{BBRT18}
\bibinfo{author}{K.~\surnamestart Bansal\surnameend}, \bibinfo{author}{C.~W.
  \surnamestart Barrett\surnameend}, \bibinfo{author}{A.~\surnamestart
  Reynolds\surnameend} \& \bibinfo{author}{C.~\surnamestart Tinelli\surnameend}
  (\bibinfo{year}{2018}): \emph{\bibinfo{title}{Reasoning with Finite Sets and
  Cardinality Constraints in SMT}}.
\newblock {\sl \bibinfo{journal}{Logical Methods in Computer Science}}
  \bibinfo{volume}{14}(\bibinfo{number}{4}), pp. \bibinfo{pages}{1--31},
  \doi{10.23638/LMCS-14(4:12)2018}.

\bibitemdeclare{}{BST10}
\bibitem{BST10}
\bibinfo{author}{C.~\surnamestart Barrett\surnameend},
  \bibinfo{author}{A.~\surnamestart Stump\surnameend} \&
  \bibinfo{author}{C.~\surnamestart Tinelli\surnameend} (\bibinfo{year}{2010}):
  \emph{\bibinfo{title}{The Satisfiability Modulo Theories Library}}.
\newblock \urlprefix\url{http://www.smtlib.org}.

\bibitemdeclare{inproceedings}{BGNRZ06}
\bibitem{BGNRZ06}
\bibinfo{author}{M.P. \surnamestart Bonacina\surnameend},
  \bibinfo{author}{S.~\surnamestart Ghilardi\surnameend},
  \bibinfo{author}{E.~\surnamestart Nicolini\surnameend},
  \bibinfo{author}{S.~\surnamestart Ranise\surnameend} \&
  \bibinfo{author}{D.~\surnamestart Zucchelli\surnameend}
  (\bibinfo{year}{2006}): \emph{\bibinfo{title}{Decidability and undecidability
  results for {N}elson-{O}ppen and rewrite-based decision procedures}}.
\newblock In \bibinfo{editor}{U.~\surnamestart Furbach\surnameend} \&
  \bibinfo{editor}{N.~\surnamestart Shankar\surnameend}, editors: {\sl
  \bibinfo{booktitle}{Automated Reasoning. IJCAR 2006}}, {\sl
  \bibinfo{series}{Lecture Notes in Computer Science}} \bibinfo{volume}{4130},
  pp. \bibinfo{pages}{513--527}, \doi{10.1007/11814771\_42}.

\bibitemdeclare{book}{BradleyManna2007}
\bibitem{BradleyManna2007}
\bibinfo{author}{Aaron~R. \surnamestart Bradley\surnameend} \&
  \bibinfo{author}{Zohar \surnamestart Manna\surnameend}
  (\bibinfo{year}{2007}): \emph{\bibinfo{title}{The calculus of computation -
  decision procedures with applications to verification}}.
\newblock \bibinfo{publisher}{Springer}, \doi{10.1007/978-3-540-74113-8}.

\bibitemdeclare{article}{CFS87}
\bibitem{CFS87}
\bibinfo{author}{D.~\surnamestart Cantone\surnameend},
  \bibinfo{author}{A.~\surnamestart Ferro\surnameend} \& \bibinfo{author}{J.T.
  \surnamestart Schwartz\surnameend}: \emph{\bibinfo{title}{Decision procedures
  for elementary sublanguages of set theory. {V}: {M}ultilevel syllogistic
  extended by the general union operator}}.
\newblock {\sl \bibinfo{journal}{Journal of Computer and System Sciences}}
  (\bibinfo{number}{1}), pp. \bibinfo{pages}{1--18},
  \doi{10.1016/0022-0000(87)90001-8}.

\bibitemdeclare{article}{CFS85}
\bibitem{CFS85}
\bibinfo{author}{D.~\surnamestart Cantone\surnameend}, \bibinfo{author}{J.~T.
  \surnamestart Schwartz\surnameend} \& \bibinfo{author}{A.~\surnamestart
  Ferro\surnameend}: \emph{\bibinfo{title}{Decision procedures for elementary
  sublanguages of set theory. VI. Multi-level syllogistic extended by the
  powerset operator}}.
\newblock {\sl \bibinfo{journal}{Communications on Pure and Applied
  Mathematics}} (\bibinfo{number}{5}), pp. \bibinfo{pages}{549--571},
  \doi{10.1002/cpa.3160380507}.

\bibitemdeclare{article}{CanCut89b}
\bibitem{CanCut89b}
\bibinfo{author}{Domenico \surnamestart Cantone\surnameend} \&
  \bibinfo{author}{Vincenzo \surnamestart Cutello\surnameend}
  (\bibinfo{year}{1989}): \emph{\bibinfo{title}{Decision procedures for
  elementary sublanguages of Set Theory. {XVI}. {M}ultilevel syllogistic
  extended by singleton, rank comparison and unary intersection}}.
\newblock {\sl \bibinfo{journal}{Bulletin of {EATCS}}} \bibinfo{volume}{39},
  pp. \bibinfo{pages}{139--148}.

\bibitemdeclare{article}{CantoneDMO21}
\bibitem{CantoneDMO21}
\bibinfo{author}{Domenico \surnamestart Cantone\surnameend},
  \bibinfo{author}{Andrea \surnamestart D{e Domenico}\surnameend},
  \bibinfo{author}{Pietro \surnamestart Maugeri\surnameend} \&
  \bibinfo{author}{Eugenio~G. \surnamestart Omodeo\surnameend}
  (\bibinfo{year}{2021}): \emph{\bibinfo{title}{Complexity Assessments for
  Decidable Fragments of Set Theory. {I}: A Taxonomy for the {B}oolean Case}}.
\newblock {\sl \bibinfo{journal}{Fundamenta Informaticae}}
  \bibinfo{volume}{181}, pp. \bibinfo{pages}{37--69},
  \doi{10.3233/fi-2021-2050}.

\bibitemdeclare{book}{CanFerOmo89a}
\bibitem{CanFerOmo89a}
\bibinfo{author}{Domenico \surnamestart Cantone\surnameend},
  \bibinfo{author}{Alfredo \surnamestart Ferro\surnameend} \&
  \bibinfo{author}{Eugenio~G. \surnamestart Omodeo\surnameend}
  (\bibinfo{year}{1989}): \emph{\bibinfo{title}{Computable set theory}}.
\newblock {\sl \bibinfo{series}{International Series of Monographs on Computer
  Science, Oxford Science Publications}}~\bibinfo{volume}{6},
  \bibinfo{publisher}{Clarendon Press}, \bibinfo{address}{Oxford, {UK}},
  \doi{10.2307/2275351}.

\bibitemdeclare{article}{CMO20}
\bibitem{CMO20}
\bibinfo{author}{Domenico \surnamestart Cantone\surnameend},
  \bibinfo{author}{Pietro \surnamestart Maugeri\surnameend} \&
  \bibinfo{author}{Eugenio~G. \surnamestart Omodeo\surnameend}:
  \emph{\bibinfo{title}{Complexity assessments for decidable fragments of set
  theory. {II:} {A} taxonomy for `small' languages involving membership}}.
\newblock {\sl \bibinfo{journal}{Theoretical Computer Science}}, pp.
  \bibinfo{pages}{28--46}, \doi{10.1016/j.tcs.2020.08.023}.

\bibitemdeclare{article}{CanOmoPol90}
\bibitem{CanOmoPol90}
\bibinfo{author}{Domenico \surnamestart Cantone\surnameend},
  \bibinfo{author}{Eugenio~G. \surnamestart Omodeo\surnameend} \&
  \bibinfo{author}{Alberto \surnamestart Policriti\surnameend}
  (\bibinfo{year}{1990}): \emph{\bibinfo{title}{The Automation of Syllogistic.
  {I}{I}: {O}ptimization and Complexity Issues}}.
\newblock {\sl \bibinfo{journal}{J. Autom. Reasoning}}
  \bibinfo{volume}{6}(\bibinfo{number}{2}), pp. \bibinfo{pages}{173--187},
  \doi{10.1007/BF00245817}.

\bibitemdeclare{book}{CanOmoPol01}
\bibitem{CanOmoPol01}
\bibinfo{author}{Domenico \surnamestart Cantone\surnameend},
  \bibinfo{author}{Eugenio~G. \surnamestart Omodeo\surnameend} \&
  \bibinfo{author}{Alberto \surnamestart Policriti\surnameend}
  (\bibinfo{year}{2001}): \emph{\bibinfo{title}{Set theory for computing -
  {\small From decision procedures to declarative programming with sets}}}.
\newblock \bibinfo{series}{Monographs in Computer Science},
  \bibinfo{publisher}{Springer-Verlag}, \bibinfo{address}{New York},
  \doi{10.1007/978-1-4757-3452-2}.

\bibitemdeclare{book}{CanUrs18}
\bibitem{CanUrs18}
\bibinfo{author}{Domenico \surnamestart Cantone\surnameend} \&
  \bibinfo{author}{Pietro \surnamestart Ursino\surnameend}
  (\bibinfo{year}{2018}): \emph{\bibinfo{title}{An Introduction to the
  Technique of Formative Processes in Set Theory}}.
\newblock \bibinfo{publisher}{Springer International Publishing},
  \doi{10.1007/978-3-319-74778-1}.

\bibitemdeclare{article}{FOS80b}
\bibitem{FOS80b}
\bibinfo{author}{Alfredo \surnamestart Ferro\surnameend},
  \bibinfo{author}{Eugenio~G. \surnamestart Omodeo\surnameend} \&
  \bibinfo{author}{Jacob~T. \surnamestart Schwartz\surnameend}
  (\bibinfo{year}{1980}): \emph{\bibinfo{title}{{D}ecision Procedures for
  Elementary Sublanguages of Set Theory. {I}: {M}ultilevel Syllogistic and Some
  Extensions.}}
\newblock {\sl \bibinfo{journal}{Comm. Pure Appl. Math.}} \bibinfo{volume}{33},
  pp. \bibinfo{pages}{599--608}, \doi{10.1002/cpa.3160330503}.

\bibitemdeclare{inproceedings}{GNZ05}
\bibitem{GNZ05}
\bibinfo{author}{S.~\surnamestart Ghilardi\surnameend},
  \bibinfo{author}{E.~\surnamestart Nicolini\surnameend} \&
  \bibinfo{author}{D.~\surnamestart Zucchelli\surnameend}
  (\bibinfo{year}{2005}): \emph{\bibinfo{title}{A Comprehensive Framework for
  Combined Decision Procedures}}.
\newblock In \bibinfo{editor}{B.~\surnamestart Gramlich\surnameend}, editor:
  {\sl \bibinfo{booktitle}{Frontiers of Combining Systems (FroCoS 2005)}}, {\sl
  \bibinfo{series}{Lecture Notes in Computer Science}} \bibinfo{volume}{3717},
  \bibinfo{publisher}{Springer, Berlin, Heidelberg}, pp.
  \bibinfo{pages}{1--30}, \doi{10.1007/11559306\_1}.

\bibitemdeclare{article}{NO79}
\bibitem{NO79}
\bibinfo{author}{G.~\surnamestart Nelson\surnameend} \& \bibinfo{author}{D.C.
  \surnamestart Oppen\surnameend} (\bibinfo{year}{1979}):
  \emph{\bibinfo{title}{Simplification by cooperating decision procedures}}.
\newblock {\sl \bibinfo{journal}{ACM Transactions on Programming Languages and
  Systems}} \bibinfo{volume}{1}(\bibinfo{number}{2}), pp.
  \bibinfo{pages}{245--257}, \doi{10.1145/357073.357079}.

\bibitemdeclare{inproceedings}{RRZ05}
\bibitem{RRZ05}
\bibinfo{author}{S.~\surnamestart Ranise\surnameend},
  \bibinfo{author}{C.~\surnamestart Ringeissen\surnameend} \&
  \bibinfo{author}{C.G. \surnamestart Zarba\surnameend} (\bibinfo{year}{2005}):
  \emph{\bibinfo{title}{Combining Data Structures with Nonstably Infinite
  Theories Using Many-Sorted Logic}}.
\newblock In \bibinfo{editor}{B.~\surnamestart Gramlich\surnameend}, editor:
  {\sl \bibinfo{booktitle}{Frontiers of Combining Systems (FroCoS 2005)}}, {\sl
  \bibinfo{series}{Lecture Notes in Computer Science}} \bibinfo{volume}{3717},
  \bibinfo{publisher}{Springer, Berlin, Heidelberg}, pp.
  \bibinfo{pages}{48--64}, \doi{10.1007/11559306\_3}.

\bibitemdeclare{book}{SchCanOmo11}
\bibitem{SchCanOmo11}
\bibinfo{author}{Jacob~T. \surnamestart Schwartz\surnameend},
  \bibinfo{author}{Domenico \surnamestart Cantone\surnameend} \&
  \bibinfo{author}{Eugenio~G. \surnamestart Omodeo\surnameend}
  (\bibinfo{year}{2011}): \emph{\bibinfo{title}{Computational logic and set
  theory: Applying formalized logic to analysis}}.
\newblock \bibinfo{publisher}{Springer-Verlag},
  \doi{10.1007/978-0-85729-808-9}.
\newblock \bibinfo{note}{Foreword by M. Davis}.

\bibitemdeclare{inproceedings}{MultiZarba02}
\bibitem{MultiZarba02}
\bibinfo{author}{Calogero~G. \surnamestart Zarba\surnameend}:
  \emph{\bibinfo{title}{Combining Multisets with Integers}}.
\newblock In \bibinfo{editor}{Andrei \surnamestart Voronkov\surnameend},
  editor: {\sl \bibinfo{booktitle}{Automated Deduction - CADE-18, 18th
  International Conference on Automated Deduction, Copenhagen, Denmark, July
  27-30, 2002, Proceedings}}, \bibinfo{series}{Lecture Notes in Computer
  Science}, \bibinfo{publisher}{Springer}, pp. \bibinfo{pages}{363--376},
  \doi{10.1007/3-540-45620-1\_30}.

\bibitemdeclare{article}{Zarba05}
\bibitem{Zarba05}
\bibinfo{author}{Calogero~G. \surnamestart Zarba\surnameend}:
  \emph{\bibinfo{title}{Combining Sets with Cardinals}}.
\newblock {\sl \bibinfo{journal}{J. Autom. Reason.}} (\bibinfo{number}{1}), pp.
  \bibinfo{pages}{1--29}, \doi{10.1007/s10817-005-3075-8}.

\bibitemdeclare{inproceedings}{Zarba02}
\bibitem{Zarba02}
\bibinfo{author}{Calogero~G. \surnamestart Zarba\surnameend}:
  \emph{\bibinfo{title}{Combining Sets with Integers}}.
\newblock In \bibinfo{editor}{Alessandro \surnamestart Armando\surnameend},
  editor: {\sl \bibinfo{booktitle}{Frontiers of Combining Systems, 4th
  International Workshop, FroCoS 2002, Santa Margherita Ligure, Italy, April
  8-10, 2002, Proceedings}}, \bibinfo{series}{Lecture Notes in Computer
  Science}, \bibinfo{publisher}{Springer}, pp. \bibinfo{pages}{103--116},
  \doi{10.1007/3-540-45988-X\_9}.

\end{thebibliography}

\end{document}